\def\<#1>{\langle#1\rangle}
\let\set\mathbbm
\def\vec#1{\mathbf{#1}}
\def\spread{\operatorname{Spread}}
\def\disp{\operatorname{Disp}}
\def\res{\operatorname{res}}
\newtheorem{theorem}{Theorem}
\newtheorem{proposition}{Proposition}
\newtheorem{corollary}{Corollary}
\newtheorem{lemma}{Lemma}
\newtheorem{definition}{Definition}
\newtheorem{example}{Example}
\newtheorem{algorithm}{Algorithm}
\begin{document}

 \title{Partial Denominator Bounds for\\ Partial Linear Difference Equations}

\numberofauthors{2}
\author{
 \alignauthor Manuel Kauers\titlenote{Supported by the Austrian Science Fund
   (FWF) grants P20347-N18 and Y464-N18.}\\[\smallskipamount]
      \affaddr{RISC}\\
      \affaddr{Johannes Kepler University}\\
      \affaddr{4040 Linz (Austria)}\\[\smallskipamount]
      \email{mkauers@risc.jku.at}
 \alignauthor Carsten Schneider\titlenote{Supported by the Austrian Science Fund
   (FWF) grants P20162-N18 and P20347-N18.}\\[\smallskipamount]
      \affaddr{RISC}\\
      \affaddr{Johannes Kepler University}\\
      \affaddr{4040 Linz (Austria)}\\[\smallskipamount]
      \email{cschneid@risc.jku.at}
}

\maketitle

\begin{abstract}
  We investigate which polynomials can possibly occur as factors in the
  denominators of rational solutions of a given partial linear difference
  equation (PLDE).  Two kinds of polynomials are to be distinguished, we call
  them \emph{periodic} and \emph{aperiodic}.  The main result is a generalization
  of a well-known denominator bounding technique for univariate equations
  to PLDEs. This generalization is able to find all the
  aperiodic factors of the denominators for a given PLDE.
\end{abstract}

\category{I.1.2}{Computing Methodologies}{Symbolic and Algebraic Manipulation}[Algorithms]

\terms{Algorithms}

\keywords{Difference Equations, Rational Solutions}

\section{Introduction}

Several algorithms in symbolic computation depend on a subroutine for finding
the rational solutions of an ordinary linear difference (or differential)
equation (OLDE), and several algorithms are known for implementing such a
subroutine~\cite{abramov71,Abramov:89b,abramov91,Petkov:92,hoeij98,weixlbaumer01,bostan06,chen08b,abramov10}.
On a conceptual level, the typical approach for finding rational solutions can
be divided into three steps.  In the first step, one constructs a polynomial~$Q$
such that the denominator~$q$ of any potential solution $p/q$ must divide~$Q$.
This polynomial $Q$ is called \emph{universal denominator} or \emph{denominator
  bound}.  In the second step, the universal denominator is used to transform
the given equation into a new equation such that $P$ is a \emph{polynomial}
solution of the new equation if and only if $P/Q$ is a \emph{rational} solution
of the original one.  In the third and final step, the polynomial solutions $P$
of the transformed equation are determined.

The first algorithm for computing universal denominators in the case of OLDEs
with polynomial coefficients was proposed in 1971 by Abramov~\cite{abramov71}  (see
Section~\ref{sec:abramov} below for a summary). It has been generalized to
$q$-difference equations~\cite{abramov95},
to matrix equations~\cite{barkatou99}, and also to
equations whose coefficients belong to domains other than polynomials. For
example, Bronstein~\cite{bronstein00} and Schneider~\cite{schneider04c} have observed that a
universal denominator can be constructed also when the coefficient domains are
difference fields which can be used for representing nested sums and products
($\Pi\Sigma$-fields).  For such domains, the situation is more involved.
There is a need to distinguish between ``normal'' factors of the universal
denominator which can be found very much like in the usual polynomial case,
and ``special'' factors which have to be constructed by some other means.

In the present article, we consider partial (i.e., multivariate) linear
difference equations with polynomial coefficients (PLDEs).  Our ultimate goal is
the construction of a universal denominator for potential rational solutions of
a given PLDE. Like in the univariate case with sophisticated coefficient
domains, there are two kinds of factors to be distinguished. As a matter of fact,
some parts of the denominator cannot be bound at all. For example, the equation
\[
  f(n+1,k) = f(n,k+1)
\]
has $(n+k)^{-\alpha}$ as a rational solution, for any $\alpha\in\set N$, and there
is obviously no finite polynomial $Q$ that would be a multiple of
$(n+k)^{\alpha}$ for all $\alpha\in\set N$.  We will call factors
that may exhibit such ``special'' behaviour \emph{periodic.} Our main result is
that we can construct for any given PLDE a polynomial $d$ such that every
\emph{aperiodic} factor of any potential solution $p/q$ must divide~$d$.

Such a bound on the aperiodic factors of the denominators does not directly give
rise to a full algorithm for finding rational solutions of PLDEs, but it can be
considered as a step in this direction. For a full algorithm, besides of
the bounding of the periodic parts of the denominator, also the entire question
of how to find (all) polynomial solutions of a PLDE in the third step is wide
open and far from being settled.
But even if these parts have to remain open for now, our aperiodic denominator
bound is useful in practice. When it comes to solving an actual equation, possible
periodic factors in a solution can often be guessed by inspection, their
multiplicities can be determined by trial and error, and degree bounds for
polynomial solutions can be established heuristically. A reasonably tight universal
denominator, on the other hand, cannot be as easily obtained on heuristic grounds.

\section{The Univariate Case}\label{sec:abramov}

Before entering the multivariate setting, let us summarize Abramov's classical
denominator bound for univariate equations.
We will introduce on the fly some notions and notations needed later.

Let $\set K$ be a field of characteristic zero and let $\set K[n]$ and $\set K(n)$
denote the ring of univariate polynomials and the field of rational functions in
$n$ with coefficients in~$\set K$, respectively.
Write $N$ for the shift operator acting on $\set K[n]$ and $\set K(n)$ via
\[
  Nq(n) := q(n+1).
\]
The objects of interest are difference equations of the form
\begin{equation}\label{eq:olde}
  a_0y + a_1Ny + \cdots + a_mN^my=f
\end{equation}
where $a_0,\dots,a_m,f\in\set K[n]$ ($a_0,a_m\neq0$) are given and $y\in\set K(n)$ is unknown.

The denominator bounding problem is as follows: given $a_0,\dots,a_m,f\in\set K[n]$,
find $Q\in\set K[n]\setminus\{0\}$ such that the denominator of any solution $y\in\set K(n)$
of~\eqref{eq:olde} divides~$Q$.
Abramov's denominator bounding algorithm~\cite{abramov71} is an efficient way of computing
\begin{equation}\label{Equ:AbramovDen}
  \gcd(\prod_{i=0}^s N^i a_0, \prod_{i=0}^s N^{-m-i} a_m),
\end{equation}
where
\begin{equation}\label{Equ:UnivariateDisp}
s:=\max\{\,i\geq0:\gcd(N^i a_0,N^{-m} a_m)\neq1\,\}
\end{equation}
is the
\emph{dispersion} of $a_0$ and~$N^{-m}a_m$. It is efficient in the sense that
the gcd is constructed without explicitly calculating the products.

To see that this bound is correct, write~\eqref{eq:olde} in the form
\[
  N^my=\frac1{a_m}\Bigl(f-\sum_{i<m}a_iN^iy\Bigr).
\]
Shifting this equation by $s$ gives
\[
  N^{m+s}y=\frac1{N^sa_m}\Bigl(N^sf-\sum_{i<m}(N^sa_i)(N^{i+s}y)\Bigr).
\]
By repeatedly using the recurrence, the terms $N^{i+s}y$ appearing on the
right hand side can be reduced to smaller shifts of~$y$ so that
for certain polynomials $b,b_0,\dots,b_{m-1}$ we have
\[
  N^{m+s}y=\frac{b-\sum_{i<m} b_i N^i y}{\prod_{i=0}^s N^i a_m}.
\]
At this point we rely on the following result.

\begin{theorem}\label{Thm:BoundUnivariateSol} \cite{abramov71}
For any solution $y=\frac{p}{q}\in\set K(n)$ of~\eqref{eq:olde},
$$\max\{i\geq0:\gcd(q,N^iq)\neq1\}\leq s.$$
\end{theorem}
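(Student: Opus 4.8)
The plan is to prove the inequality directly, by localizing equation~\eqref{eq:olde} at a single shift-orbit of irreducible factors of $q$ and reading off its two extreme poles. I would assume without loss of generality that $y=p/q$ is in lowest terms, and I would dispose of the case where $q$ is constant (there the left-hand maximum is over the empty set and nothing is to be proved). Writing $d:=\max\{i\geq0:\gcd(q,N^iq)\neq1\}$, the first step is to reinterpret $d$ orbit-wise: a nontrivial common factor of $q$ and $N^iq$ is exactly a pair of irreducible factors of $q$ lying in one $N$-orbit at distance~$i$, so $d$ is the largest gap occurring inside any shift-orbit of irreducible factors of~$q$. Fixing an irreducible $u\mid q$ whose orbit realizes this gap, and letting $a$ and $b$ be the least and greatest integers with $N^au\mid q$ and $N^bu\mid q$, I then have $b-a=d$.

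The heart of the argument is a pole-cancellation analysis at the two ends of this orbit. Let $v_\pi$ denote the $\pi$-adic valuation on $\set K(n)$ for an irreducible $\pi$. For the right end, set $\pi:=N^{b+m}u$. Since $\pi\mid N^iq$ holds only when $N^{b+m-i}u\mid q$, which within the range $0\leq i\leq m$ forces $i=m$, the term $a_mN^my$ is the unique summand on the left of~\eqref{eq:olde} that can have a pole at~$\pi$. Because $f$ is a polynomial and $\gcd(p,q)=1$ gives $v_\pi(N^mp)=0$ while $v_\pi(N^mq)>0$, the ultrametric inequality applied to $a_mN^my=f-\sum_{i<m}a_iN^iy$ shows that $\pi$ must already divide $a_m$, i.e.\ $N^{b+m}u\mid a_m$, equivalently $N^bu\mid N^{-m}a_m$. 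Running the symmetric computation at the left end with $\pi':=N^au$, where now only the $i=0$ summand $a_0y$ can carry a pole, yields $N^au\mid a_0$.

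It remains to combine the two divisibilities. Shifting $N^au\mid a_0$ by $d=b-a$ gives $N^bu\mid N^da_0$, and together with $N^bu\mid N^{-m}a_m$ this exhibits $N^bu$ as a common factor, so $\gcd(N^da_0,N^{-m}a_m)\neq1$ with $d\geq0$. By the definition~\eqref{Equ:UnivariateDisp} of $s$ this forces $s\geq d$, which is the claim.

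I expect the only delicate point to be the index bookkeeping in the two pole-cancellation steps, namely correctly translating ``$N^ju\mid q$'' into the valuation $v_\pi(N^iq)$ of the shifted denominators, and thereby pinning down that precisely the extreme terms $i=m$ and $i=0$ are responsible for the poles at the extreme shifts $N^{b+m}u$ and $N^au$. This rests on the fact that in characteristic zero the shifts of a nonconstant $u$ are pairwise non-associate irreducibles, so their valuations are independent; granting that, everything else is routine.
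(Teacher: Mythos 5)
Your proof is correct, and every step checks out: the orbit-wise reading of $\max\{i\geq0:\gcd(q,N^iq)\neq1\}$, the valuation argument showing that only the $i=m$ term of \eqref{eq:olde} can have a pole at $\pi=N^{b+m}u$ and only the $i=0$ term at $\pi'=N^{a}u$, and the combination of $N^{b+m}u\mid a_m$ with $N^{a}u\mid a_0$ into $\gcd(N^{d}a_0,N^{-m}a_m)\neq1$, which by \eqref{Equ:UnivariateDisp} gives $d\leq s$. One caveat on the comparison, though: the paper does not prove this theorem at all; it quotes it from \cite{abramov71} and uses it as a black box to justify the bound \eqref{Equ:AbramovDen}. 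The fair comparison is therefore with the paper's proof of Lemma~\ref{Lemma:DispBound}, the multivariate generalization whose univariate instance is exactly this statement (see Example~\ref{Exp:DispConnection}). Both arguments rest on the same mechanism: an irreducible factor sitting at an extreme position of a shift orbit of the denominator cannot be cancelled by any other shifted copy of the solution, so it must be absorbed by the coefficient attached to an extreme point of the support. But the organization differs. The paper's proof is by contradiction (assume the dispersion exceeds $s_i$), works at one end at a time, with a case split on whether the chosen factor divides some coefficient indexed by $A$, and needs lexicographic-maximality choices to rule out interference among the many shifted copies that can arise in several variables. Your proof is direct and two-ended: it extracts the divisibilities at both extremes simultaneously and combines them, the $\pi$-adic valuation doing the work of the maximality bookkeeping. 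This is cleaner in one variable precisely because of the fact you single out at the end, that distinct shifts of a nonconstant irreducible are non-associate in characteristic zero, so each orbit has a well-defined leftmost and rightmost factor. Moreover, the same end-point analysis applied to every orbit (not just the one realizing the maximal gap) yields the univariate shadow of the paper's Theorem~\ref{Thm:MustContainFactors}: every irreducible factor of $q$ is shift equivalent to a factor of $a_0$ and to a factor of $a_m$. The price of your streamlined version is that the direct two-ended combination does not carry over to several variables unchanged, which is why the paper's multivariate proof is structured the way it is.
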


This theorem ensures that the denominator of a solution $y$ cannot contain
two factors $u,v$ with $u=N^{s+1}v$, and this in turn implies that no denominator of
any of the $N^iy$ on the right can have a common factor with the denominator of
$N^{m+s}y$.  Therefore the denominator of $N^{m+s}y$ must be a divisor of
$\prod_{i=0}^s N^i a_m$, and therefore the denominator of $y$ must be a divisor
of
\begin{equation}\label{Equ:BoundForRightCornerPoint}
\prod_{i=0}^s N^{i-m-s} a_m=\prod_{i=0}^s N^{-m-i}a_m.
\end{equation}

By an analogous argument, now rewriting $y$ in terms of higher shifts, it can be
shown that the denominator of $y$ must divide $\prod_{i=0}^s N^i a_0$.  As both
bounds must hold simultaneously, the correctness of Abramov's bound is
established. The argument given here is not exactly Abramov's
original one, but follows a proof which to our knowledge first appeared
in~\cite{barkatou99}. The equivalence of the two approaches is shown in~\cite{chen08b}.

\section{The Multivariate Case}

For the rest of this article we will be concerned with adapting the univariate reasoning
sketched in the previous section to the multivariate
setting. From now on, we consider polynomials and rational functions in
$r$ variables $n_1,\dots,n_r$. Wherever it seems appropriate, we will use
multiindex notation, writing for instance $\vec n$ for $n_1,\dots,n_r$, etc.

We define shift operators $N_1,\dots,N_r$ acting on $\set K[\vec n]$ and
$\set K(\vec n)$ in the obvious way:
\begin{alignat*}1
  &N_i q(n_1,\dots,n_r) \\
  &\quad{}= q(n_1,\dots,n_{i-1},n_i+1,n_{i+1},\dots,n_r).
\end{alignat*}
For $\vec i=(i_1,\dots,i_r)\in\set Z^r$ we will abbreviate
\[
  N^{\vec i}q := N_1^{i_1}N_2^{i_2}\cdots N_r^{i_r}q.
\]

A partial linear difference equation (PLDE) is an equation of the form
\begin{equation}\label{Equ:PLDE}
  \sum_{\vec s\in S} a_{\vec s}N^{\vec s}y=f,
\end{equation}
where $S\subseteq\set Z^r$ is finite and nonempty (called the \emph{support}
or the \emph{shift set} or the \emph{structure set} of the equation),
$f\in\set K[\vec n]$ and $a_{\vec s}\in\set K[\vec n]\setminus\{0\}$ ($s\in S$) are
explicitly given polynomials, and $y\in\set K(\vec n)$ is an unknown rational function.
The polynomial $a_{\vec s}$ is called the coefficient of $N^{\vec s}$ (or simply
of~$\vec s$).

In the following section, we generalize the notion of dispersion to multivariate
polynomials, and we will define the notions of periodic and aperiodic polynomials.
After that, in Section~\ref{sec:cs}, we will show how to predict all the aperiodic
factors of a rational solution $y$ of~\eqref{Equ:PLDE}.

% \section{Problem Description}

% We consider polynomials $p\in\set K[\vec n]$ where $\vec n=(n_1,\dots,n_r)$, and we define the shift operators $N_1,\dots,N_r$ acting on $\set K[\vec n]$ by $$N_i\,p=p(n_1,\dots,n_{i-1},n_i+1,n_{i+1},\dots,n_r).$$
% For $\vec{i}=(i_1,\dots,i_r)\in\set Z^r$ we use also the short notation
% $$N^{\vec{i}}p=N_1^{i_1}N_2^{i_2}\cdots N_r^{i_r}(p).$$

% Subsequently, we consider the following problem.\\
% \textbf{Given} a finite set $S\subseteq\set Z^r$ and polynomials $a_{\vec s}\in\set P:=\set K[n_1,\dots,n_r]$ for $\vec s\in S$ and $f\in\set P$, \textbf{find} rational solutions $y\in\set F:=\set K(n_1,\dots,n_r)$ of the partial linear difference equation (PLDE)
% \begin{equation}\label{Equ:PLDE}
% \sum_{\vec s\in S}a_{\vec s}N^{\vec s}(y)=f;
% \end{equation}
% subsequently, we call $S$ also the shift set of the PLDE and we call $\vec{s}\in S$ a shift vector. The polynomial $a_{\vec s}$ is also called the coefficient of the shift $N^{\vec{s}}$ (resp.\ of shift vector $\vec{s}$).

% In order to find such solutions, we apply the strategy of the univariate case to the multivariate case.

% \todo{some further information; link the next sections}

\section{Spread and Dispersion}

The notions of spread and dispersion are related to the question whether two polynomials can
be mapped to one another by a shift. In the multivariate case, we now allow independent shifts
in all directions.

Given two polynomials $p,q\in\set K[\vec n]$, we say that they are shift equivalent if
there exists $\vec i=(i_1,\dots,i_r)\in\set Z^r$ such that $p(\vec n)=q(\vec n+\vec i)$.
In operator notation, $p$~and $q$ are shift equivalent iff $p=N^{\vec i}q$.

\begin{definition}
  Let $p,q\in\set K[\vec n]$.
  The set
  \begin{alignat*}1
    &\spread(p,q):=\{\,(i_1,\dots,i_r)\in\set Z^r:\\
    &\hspace{10em}\gcd(p,N_1^{i_1}\cdots N_r^{i_r}q)\neq1\,\}
  \end{alignat*}
  is called the \emph{spread} of $p$ and~$q$.
  The number
  \[
    \disp_k(p,q):=\max\bigl\{\,|i_k| : (i_1,\dots,i_r)\in\spread(p,q)\,\bigr\}
  \]
is called the \emph{dispersion} of $p$ and~$q$ w.r.t.\ $k\in\{1,\dots,r\}$, and
  \[
    \disp(p,q):=\max(\disp_1(p,q),\dots,\disp_r(p,q))
  \]
  is called \emph{dispersion} of $p$ and~$q$. (By convention, $\max A:=-\infty$
  if $A$ is empty and $\max A:=\infty$ if $A$ is unbounded.)

  The polynomial $p$ is called \emph{periodic} if $\spread(p,p)$ is infinite and \emph{aperiodic} otherwise.
\end{definition}

Note that this definition does not exactly correspond to the definition stated before
for the univariate case. While there, the definition depends on whether one shifts
to the left or to the right~\cite{abramov71,bronstein00}, our definition takes all
directions into account. This makes the reasoning below a little simpler.

\begin{example}
  Let
  \begin{alignat*}1
    p&=\bigl(n^2+k^2\bigr)\bigl((n+1)^2+(k-3)^2\bigr)\bigl(k-n+3\bigr),\\
    q&=\bigl((n+2)^2+(k-1)^2\bigr)\bigl((n-2)^2+(k+7)^2\bigr)\bigl(2k-3n\bigr).
  \end{alignat*}
  Then, by inspection,
  \begin{alignat*}1
    \spread(p,q)&=\{(2,-1),\ (-2,7),\ (1,2),\ (-3,10)\},\\
    \disp(p,q)&=10.
  \end{alignat*}
  Both $p$ and $q$ are aperiodic. An example for a periodic polynomial is $n-k$, because,
  again by inspection,
  \begin{alignat*}1
    &\spread(n-k,n-k)=\{\,(i,i):i\in\set Z\,\}\\
    &\quad{}=\{\dots,(-1,-1),(0,0),(1,1),(2,2),\dots\}
  \end{alignat*}
  is infinite.
\end{example}

In the univariate case, the spread of two polynomials can be found as the set of all
integer roots of the polynomial $\res_n(p(n),q(n+i))\in\set K[i]$. This is no longer possible
in the multivariate setting, for in the case of several variables, common roots no longer
correspond to common factors. Nevertheless, it turns out that the multivariate spread as defined
as above can be effectively computed. Let us consider the somewhat simpler situation of
irreducible polynomials first. In this situation the spread cannot take on any cardinality:

\begin{lemma}
  Let $p,q\in\set K[\vec n]$ be irreducible and aperiodic. Then $|\spread(p,q)|\leq 1$.
\end{lemma}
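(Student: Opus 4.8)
The plan is to exploit irreducibility in order to turn the weak condition $\gcd(p,N^{\vec i}q)\neq1$ into the much stronger statement that $p$ and $N^{\vec i}q$ are associates. Each shift operator $N^{\vec i}$ is a $\set K$-algebra automorphism of $\set K[\vec n]$, with inverse $N^{-\vec i}$, so it maps irreducible polynomials to irreducible polynomials; since $q$ is irreducible, so is $N^{\vec i}q$. Two irreducible polynomials have a nontrivial gcd exactly when they differ by a nonzero constant factor. Hence for every $\vec i\in\spread(p,q)$ there is some $c_{\vec i}\in\set K\setminus\{0\}$ with $p=c_{\vec i}\,N^{\vec i}q$.

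Next I would argue by contradiction. Suppose $\spread(p,q)$ contains two distinct elements $\vec i\neq\vec j$. From $p=c_{\vec i}\,N^{\vec i}q=c_{\vec j}\,N^{\vec j}q$ I can apply $N^{-\vec j}$ to both sides and cancel, obtaining $N^{\vec k}q=\lambda q$, where $\vec k:=\vec i-\vec j\neq0$ and $\lambda:=c_{\vec j}/c_{\vec i}\in\set K\setminus\{0\}$. Thus a single nonzero shift already sends $q$ to a scalar multiple of itself.

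I would then iterate this relation: applying $N^{\vec k}$ repeatedly, and likewise its inverse, yields $N^{m\vec k}q=\lambda^m q$ for every $m\in\set Z$. Since $\lambda^m\neq0$, each $N^{m\vec k}q$ is again an associate of $q$, so $\gcd(q,N^{m\vec k}q)=q\neq1$, which means $m\vec k\in\spread(q,q)$ for all $m$. As $\vec k\neq0$, the set $\{\,m\vec k:m\in\set Z\,\}$ is infinite, hence $\spread(q,q)$ is infinite and $q$ is periodic, contradicting the hypothesis. This forces $|\spread(p,q)|\leq1$.

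I expect the only genuinely delicate point to be the opening reduction, namely that a nontrivial gcd of two irreducibles forces associatedness and that shifting preserves irreducibility; everything afterwards is formal manipulation. One could alternatively pin down $\lambda=1$ by comparing the top-degree homogeneous parts, which a shift leaves unchanged, but this refinement is not needed for the contradiction.
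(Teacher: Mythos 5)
Your proof is correct and takes essentially the same approach as the paper's: irreducibility plus a nontrivial gcd forces associatedness, and two distinct spread elements then yield a nontrivial shift relation on $q$ that contradicts aperiodicity. You are in fact slightly more careful than the paper, which conflates the two constants into a single $c$ and leaves implicit the iteration step; your explicit $\lambda = c_{\vec j}/c_{\vec i}$ and the orbit $\{\,m\vec k : m\in\set Z\,\}\subseteq\spread(q,q)$ close both of these small gaps.
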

\begin{proof}
  Suppose $p$ and $q$ are such that $|\spread(p,q)|>1$. Then there exist two different
  multiindices $(i_1,\dots,i_r)$ and $(j_1,\dots,j_r)$ with
  \[
    \gcd(p,N_1^{i_1}\cdots N_r^{i_r}q)\neq1
    \quad\text{and}\quad
    \gcd(p,N_1^{j_1}\cdots N_r^{j_r}q)\neq1.
  \]
  As $p$ and $q$ are irreducible and irreducibility is preserved under the shifts $n_i\mapsto n_i+1$,
  we have in fact
  \[
    c p = N_1^{i_1}\cdots N_r^{i_r}q = N_1^{j_1}\cdots N_r^{j_r}q
  \]
  for some $c\in\set K\setminus\{0\}$, hence
  \[
    q = N_1^{i_1-j_1}\cdots N_r^{i_r-j_r}q,
  \]
  in contradiction to the assumption that $q$ is aperiodic.
\end{proof}

It is not essential for the lemma that we consider only shifts of
integer distance. More generally, if $p$~and $q$ are two
irreducible polynomials, then, by the same argument, the number of vectors $\vec
i\in\bar{\set K}^r$ with $p(\vec n)=q(\vec n+\vec i)$ is either $0$ or~$1$
or infinite ($\bar{\set K}$~refers to the algebraic closure of~$\set K$).
These vectors $\vec i\in\bar{\set K}$ can be found by making a brute force
ansatz. For a variable vector $\vec i=(i_1,\dots,i_r)$, force
\[
  p(\vec n)-q(\vec n+\vec i)\stackrel!=0
\]
and compare coefficients with respect to $\vec n$ to obtain an algebraic system
of equations in $i_1,\dots,i_r$ over~$\set K$.
The solutions will form an affine linear space over $\set K$, for whenever
$\vec i,\vec j\in\bar{\set K}^r$ are such that
$p(\vec n)=q(\vec n+\vec i)$ and $p(\vec n)=q(\vec n+\vec j)$, then
\[
  p(\vec n)=q(\vec n+\vec i+ \alpha(\vec i-\vec j))
  \text{ for all $\alpha\in\set Z$},
\]
and since the solution set is Zariski-closed, what is true for all $\alpha\in\set Z$
must also be true for all $\alpha\in\set K$.

The spread of two irreducible polynomials can therefore be computed by first determining
a basis of the affine linear space of all
possible shifts $\vec i\in\bar{\set K}^r$ mapping one given polynomial to the
other. By taking the radical to remove nontrivial multiplicities, it is ensured that
a basis of the linear space can be read off from a Gr\"obner basis.
In a second step, we filter out from this affine space the vectors
which have integral coordinates only:

\begin{algorithm}\label{Alg:Spread}
Input: $p,q\in\set K[\vec n]$ irreducible\\
Output: $\spread(p,q)$

\begin{algo}%
S := 'Coefficients'(p(\vec n)-q(\vec n+\vec i), \{\vec n\})\subseteq\set K[\vec i]
G := 'Gr\ddot obnerBasis'('Radical'(S), 'degrevlex'(\vec i))
% |if| G=\{1\} |then| |return| \emptyset
\vtop{\hsize=.9\hsize\hangindent=2em\hangafter=1\noindent %
 Choose a basis~$B$ of the $\set Q$ vector space generated by the coefficients of %
 the elements of $G$, say $B=\{b_1,\dots,b_d\}\subseteq\set K$.}
\vtop{\hsize=.9\hsize\hangindent=2em\hangafter=1\noindent %
 For each $g\in G$, let $g^{(1)},\dots,g^{(d)}\in\set Q[\vec i]$ be such that %
 $g=b_1g^{(1)}+b_2g^{(2)}+\cdots+b_dg^{(d)}$. (Note: At this point all $g\in G$ %
 are linear, and so are all the $g^{(k)}$.)}
S := \bigcup_{g\in G}\bigl\{g^{(1)},g^{(2)},\dots,g^{(d)}\bigr\}
|return| \ker_{\set Z}(S)%
 \end{algo}
\end{algorithm}

Note that the algorithm avoids the need of solving systems of diophantine
equations by exploiting a priori knowledge on the structure of the solution set.

The case of non-irreducible polynomials is easily reduced to the former case
by considering all pairs of factors. To be precise, let $p,q\in\set K[\vec n]\setminus\{0\}$ be
any polynomials, and let
\[
  p=p_1^{u_1}p_2^{u_2}\cdots p_r^{u_r},\qquad
  q=q_1^{v_1}q_2^{v_2}\cdots q_s^{v_s}
\]
be their factorization into irreducible factors. Then
\[
 \spread(p,q)=\bigcup_{i=1}^r\bigcup_{j=1}^s\spread(p_i,q_j).
\]
In short, given $p,q\in\set K[\vec n]\setminus\{0\}$ we can compute
$\spread(p,q)$ and therefore also $\disp_i(p,q)$ and~$\disp(p,q)$.

Every given polynomial $p$ can be split uniquely (up to constant multiples)
into a factorization $p=uv$ where $u$ is periodic and $v$ is aperiodic.
We call $u$ and $v$ the \emph{periodic} and \emph{aperiodic part} of~$p$, respectively.
As we can factor polynomials and compute, as described above, their spread,
this decomposition can be computed.

\begin{example}
\begin{enumerate}
\item $p=n+k$, $q=n+2k$. Here we have
  \[
    p(n,k)-q(n+i,k+j)=(-i+2j)-k
  \]
  and $G=\{1\}$, hence $\spread(p,q)=\emptyset$.
\item $p=n+\sqrt2k$, $q=n+\sqrt2k+3-2\sqrt2$. Here we have
  \[
    p(n,k)-q(n+i,k+j)=- i - \sqrt2 j-3 + 2 \sqrt2
  \]
  and $G=\{- i - \sqrt2 j-3 + 2 \sqrt2 \}$.
  With $B=\{1,\sqrt2\}$ we get
  $S=\{-i-3,-j+2\}$,
  from which we obtain
  \[
   \spread(p,q)=\{(-3,2)\}.
  \]
\item $p=3 k^2+6 k n-7 k+3 n^2-7 n+1$, $q=3 k^2+6 k n-13 k+3 n^2-13 n+11$.
  Here we have
  \begin{alignat*}1
    &p(n,k)-q(n+i,k+j)=-(i + j-1) (3 i + 3 j -10)\\
    &\qquad{}-6 (i+j-1)k+6 (i+j-1)n
  \end{alignat*}
  and $G=\{i+j-1\}$. With $B=\{1\}$, we get $S=G$ from which we obtain
  \[
   \spread(p,q)=\begin{pmatrix}1\\0\end{pmatrix}+\set Z\begin{pmatrix}1\\-1\end{pmatrix}.
  \]
\end{enumerate}
\end{example}

\section{Aperiodic Factors in Denominators of Solutions}\label{sec:cs}

In this section we solve the following problem. \textbf{Given} a nonempty finite
shift set $S\subseteq\set Z^r$ with coefficients $a_{\vec s}\in\set K[\vec
n]\setminus\{0\}$ ($\vec s \in S$), \textbf{find} a polynomial $d\in\set K[\vec
n]\setminus\{0\}$ such that for any solution $y=\frac{p}{u\,q}$
of~\eqref{Equ:PLDE} with $p,q,u\in\set K[\vec n]$ where $q$ is aperiodic and $u$
is periodic, we have $$q\mid d.$$ Such a $d$ is called an \emph{aperiodic
  universal denominator} (or \emph{aperiodic denominator bound})
of~\eqref{Equ:PLDE}.

First we generalize Theorem~\ref{Thm:BoundUnivariateSol}, i.e., for any solution $y\in\set K(\vec n)$ of~\eqref{Equ:PLDE} we bound the dispersion of the aperiodic denominator part of $y$. To be more precise, we first bound the dispersion w.r.t.\ one component $n_i$ in $\vec{n}$.

\begin{lemma}\label{Lemma:DispBound}
Let $S\subseteq\set Z^r$ be finite and nonempty and let $a_{\vec s}\in\set K[\vec n]\setminus\{0\}$ for $\vec s \in S$, and $f\in\set K[\vec n]$; let $a'_{\vec{s}}$ be the aperiodic part of $a_{\vec{s}}$. Let $i\in\{1,\dots,r\}$. Define
$$k:=\max\{|a_i-b_i|: (a_1,\dots,a_r),(b_1\dots,b_r)\in S\}$$
and let
\begin{align*}
A=&\{(s_1,\dots,s_r):(s_1,\dots,s_r)\in S
\text{ and }\\
&\quad\exists (t_1,\dots,t_r)\in S\text{ s.t.\ } t_i-s_i=k\},\\
B=&\{(s_1,\dots,s_r):(s_1,\dots,s_r)\in S\text{ and }\\
&\quad\exists (t_1,\dots,t_r)\in S\text{ s.t.\ } s_i-t_i=k\}.
\end{align*}
Define
$$s_i:=\max\{\disp_i(a'_{\vec{s}},N^{-k}_{i}a'_{\vec{t}}):\;\vec{s}\in A\text{ and } \vec{t}\in B\}.$$
Then for any solution $y=\frac{p}{u\,q}\in\set K(\vec n)$ of~\eqref{Equ:PLDE} with periodic part $u$ and aperiodic part $q$ we have
$$\disp_i(q)=\disp_i(q,q)\leq s_i.$$
\end{lemma}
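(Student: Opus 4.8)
The plan is to imitate the univariate argument of Section~\ref{sec:abramov} and Theorem~\ref{Thm:BoundUnivariateSol}, but to carry it out one shift-orbit of irreducible factors at a time. First I would reduce the claim to a single orbit. Write $\operatorname{ord}_P$ for the multiplicity of an irreducible polynomial $P$, extended to $\set K(\vec n)$ so that poles have negative order, and let $m_i,M_i$ be the minimal and maximal $i$-th components occurring in $S$, so that $A=\{\vec s\in S:s_i=m_i\}$, $B=\{\vec s\in S:s_i=M_i\}$ and $k=M_i-m_i$. Since $\disp_i(q,q)=\max\{|j_i|:\vec j\in\spread(q,q)\}$ and any $\vec j\in\spread(q,q)$ records two shift-equivalent irreducible factors $N^{\vec a}g,N^{\vec b}g$ of $q$ with $\vec j=\vec a-\vec b$, it suffices to fix one irreducible factor $g$ of $q$, consider the set $O(g):=\{\vec j\in\set Z^r:N^{\vec j}g\mid q\}$, and prove $\beta-\alpha\le s_i$, where $\alpha:=\min\{j_i:\vec j\in O(g)\}$ and $\beta:=\max\{j_i:\vec j\in O(g)\}$. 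As $g$ is an irreducible factor of the aperiodic part it is itself aperiodic, so the shift realizing $N^{\vec j}g\mid q$ is unique; hence $O(g)$ is finite. We may take the fraction reduced, and since $g$ is aperiodic no shift of $g$ divides the periodic part $u$, so the poles of $y=p/(u\,q)$ at shifts of $g$ are exactly the $N^{\vec j}g$ with $\vec j\in O(g)$, each of positive order.

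The heart of the matter is the analogue of ``the topmost pole must be cancelled by the leading coefficient.'' In one variable the topmost pole occurs in a single term; in several variables this fails, because a fixed irreducible $N^{\vec c}g$ may divide the denominators of several terms $N^{\vec s}y$ at once, namely for every $\vec s$ with $\vec c-\vec s\in O(g)$. This is the main obstacle. I would remove it by breaking ties with a linear functional $\phi:\set Z^r\to\set R$ that is generic and whose $i$-th weight dominates the others; since $S$ and $O(g)$ are finite such a $\phi$ exists, and it attains a unique maximum over $S$ at some $\vec t^*$, a unique minimum over $S$ at some $\vec s^*$, and unique extrema over $O(g)$ at $\vec c^{+}$ and $\vec c^{-}$. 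The dominance of the $i$-th weight forces $\vec t^*\in B$ and $\vec s^*\in A$, while $c^{+}_i=\beta$ and $c^{-}_i=\alpha$. The point is that the $\phi$-largest pole $P^{+}:=N^{\vec t^*+\vec c^{+}}g$ can now divide the denominator of $N^{\vec s}y$ only for $\vec s=\vec t^*$: if $\vec t^*+\vec c^{+}-\vec s\in O(g)$ then $\phi(\vec s)\ge\phi(\vec t^*)=\max_S\phi$, so $\vec s=\vec t^*$.

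With uniqueness restored the cancellation step goes through as in the univariate case. Applying $\operatorname{ord}_{P^{+}}$ to both sides of~\eqref{Equ:PLDE} and using that $f$ and all $a_{\vec s}$ are polynomials, only the term $\vec t^*$ can have negative order at $P^{+}$; a strictly dominant negative order would survive in the sum and contradict that the left-hand side is the polynomial $f$. Hence $\operatorname{ord}_{P^{+}}(a_{\vec t^*})+\operatorname{ord}_{N^{\vec c^{+}}g}(y)\ge0$, and since $\operatorname{ord}_{N^{\vec c^{+}}g}(y)=-\operatorname{ord}_{N^{\vec c^{+}}g}(q)\le-1$ we get $\operatorname{ord}_{P^{+}}(a_{\vec t^*})\ge1$, i.e.\ $P^{+}\mid a_{\vec t^*}$. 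As $P^{+}$ is a shift of the aperiodic $g$, it is an aperiodic irreducible factor and therefore divides the aperiodic part, so $N^{\vec t^*+\vec c^{+}}g\mid a'_{\vec t^*}$. The symmetric argument applied to the $\phi$-smallest pole $P^{-}:=N^{\vec s^*+\vec c^{-}}g$ gives $N^{\vec s^*+\vec c^{-}}g\mid a'_{\vec s^*}$.

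Finally I would combine the two divisibilities. They say that $g$ divides both $N^{-\vec s^*-\vec c^{-}}a'_{\vec s^*}$ and $N^{-\vec t^*-\vec c^{+}}a'_{\vec t^*}$, whence $\gcd\bigl(a'_{\vec s^*},N^{\vec w}a'_{\vec t^*}\bigr)\neq1$ with $\vec w:=\vec s^*+\vec c^{-}-\vec t^*-\vec c^{+}$. Because $\vec s^*\in A$ and $\vec t^*\in B$ give $s^*_i=m_i$ and $t^*_i=M_i$, the $i$-th component is $w_i=\alpha-\beta-k$; absorbing the shift $N_i^{-k}$ into the second argument turns $\vec w$ into a spread vector of $a'_{\vec s^*}$ and $N_i^{-k}a'_{\vec t^*}$ whose $i$-th component is $\alpha-\beta$. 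Hence $\disp_i(a'_{\vec s^*},N_i^{-k}a'_{\vec t^*})\ge\beta-\alpha$, and as $\vec s^*\in A$, $\vec t^*\in B$ this is at most $s_i$. Since the orbit of $g$ was arbitrary, every $\vec j\in\spread(q,q)$ satisfies $|j_i|\le\beta-\alpha\le s_i$, giving $\disp_i(q,q)\le s_i$ as claimed. Beyond this bookkeeping the only delicate point is the availability of the tie-breaking functional $\phi$, which is guaranteed by the finiteness of $S$ and $O(g)$.
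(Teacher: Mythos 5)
Your proof is correct and takes essentially the same route as the paper's: both hinge on isolating, via a tie-breaking order on the non-$i$ directions (your generic linear functional with dominant $i$-th weight plays exactly the role of the paper's lexicographic maximality), an extremal shifted copy of an irreducible denominator factor which can occur in the denominator of only one term of~\eqref{Equ:PLDE}, so that polynomiality of $f$ and of the $a_{\vec s}$ forces that copy to divide the coefficient at an extreme point of $S$, and the definition of $s_i$ then bounds the dispersion. The only structural difference is that the paper argues by contradiction with a case distinction (whether $u'$ divides some $a_{\vec s}$ with $\vec s\in A$), whereas you run the cancellation argument at both extremes of each shift-orbit and combine the two resulting divisibilities directly into an element of $\spread(a'_{\vec s^*},N_i^{-k}a'_{\vec t^*})$ of $i$-th entry $\alpha-\beta$.
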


\smallskip
\centerline{\epsfig{file=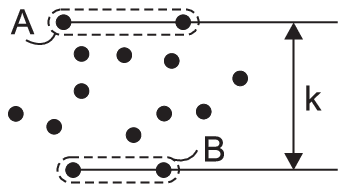}}
\smallskip

\begin{proof}
Suppose that $d:=\disp_i(q)>s_i$. Then we find irreducible factors $u$ and $v$ of $q$ such that
\begin{equation}\label{Equ:DistanceTooBig}
N^{\vec e}u=v
\end{equation}
for some $\vec e=(e_1,\dots,e_r)\in\set Z^r$ with $e_i=d$. Consider all the factors $N^{\vec u}u$ and $N^{\vec v}v$ occurring  in $q$ where the $i$th entries in $\vec u$ and $\vec v$ are $0$.
We choose now those factors from $q$ where $\vec u$ and $\vec v$ are maximal w.r.t.\ lexicographic order; these factors are denoted by $u'$ and $v'$, respectively.\\  
First suppose that $u'$ divides one of the polynomials $a_{\vec{s}}$ with $\vec{s}\in A$. As $S$ is not empty, $B$~is not empty.
Therefore, we can choose that polynomial $a_{\vec w}$ with $\vec{w}=(w_1,\dots,w_r)\in B$ such that $(w_1,\dots,w_{i-1},w_{i+1},\dots,w_r)$ is maximal w.r.t.\ lexicographic order. By~\eqref{Equ:PLDE} we can write
\begin{equation}\label{Equ:RewritePLDE}
N^{\vec w}y=\frac{1}{a_{\vec{w}}}\Big(f-\sum_{\vec{s}\in S\setminus\{\vec{w}\}}a_{\vec{s}}N^{\vec{s}}y\Big).
\end{equation}
Now observe that the factor $N^{\vec w}v'$ does not occur in the denominator of any $N^{\vec s}y$ with $\vec s\in S\setminus\{\vec{w}\}$: if $N^{\vec w}v'$ occurred in such a denominator, $\vec{s}\in B$ by construction (recall that $u'$ and $v'$ have maximal distance $d$ in the $i$-coordinate among all factors in $q$ and that $v'$ with $N^{\vec w}v'$ is shifted maximally by $k$ among all possible choices from $S$ in direction of the $i$-coordinate since $\vec{w}\in B$; so only if $\vec{s}\in B$ is necessary to guarantee that  $N^{\vec w}v'$ is a factor of $N^{\vec s}q$ and hence of the denominator of $N^{\vec s}y$). But
then by the assumption that $(w_1,\dots,w_{i-1},w_{i+1},\dots,w_r)$ is maximal w.r.t.\ lexicographic order (among all possible choices) and that $v'=N^{\vec v}v$ with $\vec v=(v_1,\dots,v_{i-1},0,v_{i+1},\dots,v_r)$ is maximal w.r.t.\ lexicographical order, it follows that $N^{\vec w}v'$ can only occur in the denominator of $N^{\vec w}y$. Summarizing, the factor $N^{\vec w}v'$ does not occur in the denominators of $N^{\vec{s}}y$ for any $s\in S\setminus\{\vec{w}\}$, and since $f,a_{\vec{s}}\in\set K[\vec{n}]$, the common denominator of
$f-\sum_{\vec{s}\in S\setminus\{\vec{w}\}}a_{\vec{s}}N^{\vec{s}}y$ does not contain the factor  $N^{\vec w}v'$. Moreover, since $u'$ is a factor of $a_{\vec{s}}$ for some $\vec{s}\in A$, and since $\vec{w}\in B$ and $d>s_i$, $N^{\vec w}v'$ cannot be a factor of $a_{\vec{s}}$ for any $\vec{s}\in B$. In particular, our $a_{\vec{w}}$ is not divisible by $N^{\vec w}v'$. Overall, the common denominator of the right hand side of~\eqref{Equ:RewritePLDE} cannot contain the factor $N^{\vec w}v'$ which implies that the denominator of $N^{\vec w}y$ is not divisible by $N^{\vec w}v'$. Thus the denominator of $y$, in particular $q$ is not divisible by $v'$; a contradiction.

Conversely, suppose that $u'$ does not divide any of the polynomials $a_\vec{s}$ with $\vec{s}\in A$. Then by similar arguments as above (the roles of $A$ and $B$ exchanged), we derive again a contradiction. Therefore $s_i\leq\disp_i(q)$.
\end{proof}

\begin{example}\label{Exp:DispConnection}
In the generic univariate case~\eqref{eq:olde} ($r=i=1$) the shift set is $S=\{0,1,\dots,m\}\subseteq\set Z^1$ and for the sets $A$ and $B$ from Lemma~\ref{Lemma:DispBound} we have
$A=\{0\}$ and $B=\{m\}$.
In this particular instance, Lemma~\ref{Lemma:DispBound} corresponds to Theorem~\ref {Thm:BoundUnivariateSol}.
\end{example}

A bound of the dispersion for the multivariate case is given in the next theorem.

\begin{theorem}\label{Thm:DispBound}
Let $S\subseteq\set Z^r$ be finite and nonempty and let $a_{\vec s}\in\set K[\vec n]\setminus\{0\}$ for $\vec s \in S$, and $f\in\set K[\vec n]$. Then one can compute an $s\in\set N\cup\{-\infty\}$ with the following property: For any solution $y=\frac{p}{u\,q}\in\set K(\vec n)$ of~\eqref{Equ:PLDE} with periodic part $u$ and aperiodic part $q$ we have
\begin{equation}\label{Equ:DisBound}
\disp(q)=\disp(q,q)\leq s.
\end{equation}
\end{theorem}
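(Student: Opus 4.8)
The plan is to obtain the theorem as an immediate consequence of Lemma~\ref{Lemma:DispBound} by maximizing over the coordinate directions. Recall that, by definition, $\disp(q,q)=\max(\disp_1(q,q),\dots,\disp_r(q,q))$. For each $i\in\{1,\dots,r\}$ let $s_i\in\set N\cup\{-\infty\}$ be the quantity constructed in Lemma~\ref{Lemma:DispBound} from the shift set $S$, the aperiodic parts $a'_{\vec s}$ of the coefficients, and the direction~$i$. I would then simply set
\[
  s:=\max(s_1,\dots,s_r)
\]
and claim that this $s$ has the required property.

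The bound itself is then immediate: for every solution $y=p/(u\,q)$ with periodic part $u$ and aperiodic part $q$, Lemma~\ref{Lemma:DispBound} gives $\disp_i(q)=\disp_i(q,q)\leq s_i$ for each~$i$, and taking the maximum over $i$ yields $\disp(q)=\disp(q,q)\leq\max_i s_i=s$. No further case analysis on $y$ is needed, since the lemma already treats an arbitrary solution in each direction separately.

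The only points that require justification are that $s$ actually lies in $\set N\cup\{-\infty\}$ (rather than being $+\infty$) and that it is computable. For computability I would note that $k$, $A$ and $B$ are determined combinatorially from~$S$, that the aperiodic parts $a'_{\vec s}$ can be extracted by factoring the coefficients and discarding their periodic irreducible factors (as described in the previous section), and that each $\disp_i$ can be evaluated via the spread computation of Algorithm~\ref{Alg:Spread}. Since $s_i$ is a maximum over the finite index set $A\times B$ and $s$ a maximum over the finitely many directions, the whole construction terminates.

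Finiteness is the one substantive point, and I expect it to be the main (though mild) obstacle. Each $s_i$ is a maximum of dispersions $\disp_i(a'_{\vec s},N_i^{-k}a'_{\vec t})$ of \emph{aperiodic} polynomials (a shift of an aperiodic polynomial is again aperiodic). To see that such a dispersion cannot be $+\infty$, I would factor $a'_{\vec s}$ and $N_i^{-k}a'_{\vec t}$ into irreducibles---each factor is again aperiodic, since a periodic irreducible factor would force the whole product to be periodic---decompose the spread over all pairs of irreducible factors as in the previous section, and invoke the earlier lemma stating that two irreducible aperiodic polynomials have spread of cardinality at most one. Thus $\spread(a'_{\vec s},N_i^{-k}a'_{\vec t})$ is finite, so each $\disp_i$ lies in $\set N\cup\{-\infty\}$; hence so do all the $s_i$ and, finally,~$s$. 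The value $-\infty$ occurs precisely when no aperiodic factor survives, for instance when $q$ is a nonzero constant.
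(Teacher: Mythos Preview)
Your proposal is correct and follows exactly the paper's approach: compute each $s_i$ from Lemma~\ref{Lemma:DispBound} via Algorithm~\ref{Alg:Spread} and set $s=\max(s_1,\dots,s_r)$. You even justify finiteness of the $s_i$ explicitly, which the paper's two-line proof leaves implicit.
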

\begin{proof}
Compute the values $s_i$ for $i\in\{1,\dots,r\}$ as described in Lemma~\ref{Lemma:DispBound}; the spread can be computed with Algorithm~\ref{Alg:Spread}. By taking
$s=\max(s_1,\dots,s_r)$ the property $\disp(q)=\disp(q,q)\leq s$ is guaranteed.
\end{proof}

In order to derive an aperiodic denominator bound, we adapt the idea presented in Section~\ref{sec:abramov}. Namely, we will choose an appropriate point $\vec{p}\in S$ and express $N^\vec{p}y$ for any solution $y\in\set K(\vec{n})$ of~\eqref{Equ:PLDE} in terms of $N^\vec{s}y$ for points $\vec{s}\in S'$ which are sufficiently far away from $\vec{p}$. To be more precise, for any $s>0$ we can explicitly write
\begin{equation}\label{Equ:ShiftRepresenation}
N^{\vec{p}}y=\frac{b+\sum_{\vec{i}\in S'}b_{\vec{i}}N^{\vec i}y}{\prod_{\vec{i}\in W-\vec{p}}N^{\vec{i}}a_{\vec{p}}}
\end{equation}
for some polynomials $b,b_{\vec{i}}\in\set K[\vec{n}]$ and for finite sets $W,S'\subseteq\set Z^r$ with the following property: the distance of the points $S'$ to $\vec{p}$ is at least $s$. Then by taking $s$ as in Theorem~\ref{Thm:DispBound}, we will be able to conclude that $\prod_{\vec{i}\in W-\vec{p}}N^{\vec{i}}a_{\vec{p}}$ is an aperiodic denominator bound of $N^{\vec{p}}y$, and consequently $\prod_{\vec{i}\in W}N^{\vec{i}-2\vec{p}}a_{\vec{p}}$ is an aperiodic denominator bound of $y$.

Such appropriate points $\vec{p}$ from $S$ can be chosen as follows.
Let $S\subseteq\set Z^r$ be a finite set. A point $\vec{p}\in S$ is called a \emph{corner point} (or \emph{extreme point}) of $S$, if there exists an affine hyperplane $H$ (codimension $1$) which contains $\vec{p}$ and where all other points $S\setminus\{\vec{p}\}$ are situated in one of the two open half spaces determined by the hyperplane.

\smallskip
\centerline{\epsfig{file=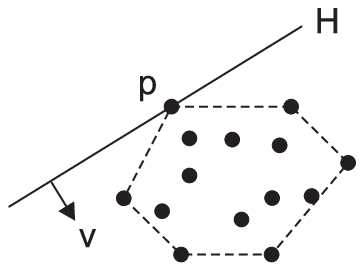}}
\smallskip

\noindent Such an affine hyperplane $H$ is called \emph{border plane} of $S$ for $\vec{p}$, and a vector being orthogonal to $H$ and directing to the half space of the points $S\setminus\{\vec{p}\}$ is called \emph{inner vector}.

Note that the corner points are the extreme points of the convex hull generated by $S$, and they can be computed by simple linear algebra; for further details see, e.g., \cite{Lang87}.

\begin{example}
In the generic univariate case~\eqref{eq:olde} ($r=1$ and $S=\{0,1,\dots,m\}$) the corner points are $0$ and $m$, and the border planes are $\{0\}$ and $\{m\}$ with inner vectors $1$ and $-1$, respectively.

More generally, if we are given a finite set $S\subseteq\set Z^r$ with $(0,\dots,0)\in S$ and $\max\{d_i:(d_1,\dots,d_r)\in S\}>0$ for each $1\leq i\leq r$, there are at least $r+1$ corner points.
\end{example}

For our denominator bound construction we start with the following simple lemma.

\begin{lemma}\label{Lemma:ShiftSystem}
Let $S\subseteq\set Z^r$ be a nonempty finite set and let $\vec p\in S$ be a corner point together with a border plane $H$ and inner vector $\vec{v}$. Consider any hyperplane $H'$ which is parallel to $H$. Then for any $\vec{p'}\in H'\cap\set Z^r$ the points $S+(\vec{p'}-\vec{p})\setminus\{\vec{p'}\}$ are all outside of $H'$ in the half space determined by the direction of $\vec{v}$.
\end{lemma}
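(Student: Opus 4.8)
The plan is to reduce the geometric statement to a single inner-product inequality describing position relative to the parallel family of hyperplanes normal to $\vec{v}$, and then to observe that a translation merely shifts the value of the relevant linear form by a constant. First I would record the border plane analytically. Since $\vec{v}$ is orthogonal to $H$ and $\vec{p}\in H$, we may write
\[
  H=\{\vec{x}\in\set R^r:\<\vec{v},\vec{x}-\vec{p}>=0\},
\]
and, because $\vec{v}$ is an inner vector pointing toward the open half space containing $S\setminus\{\vec{p}\}$, the defining property of the corner point $\vec{p}$ becomes the strict inequality
\[
  \<\vec{v},\vec{s}-\vec{p}> > 0\qquad\text{for all }\vec{s}\in S\setminus\{\vec{p}\}.
\]
Any hyperplane $H'$ parallel to $H$ is a level set of the same linear form, so for the given $\vec{p'}\in H'$ we have $H'=\{\vec{x}:\<\vec{v},\vec{x}>=\<\vec{v},\vec{p'}>\}$, and the half space ``determined by the direction of $\vec{v}$'' is $\{\vec{x}:\<\vec{v},\vec{x}> > \<\vec{v},\vec{p'}>\}$.

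Next I would identify the set in question. Translation by $\vec{p'}-\vec{p}$ is injective and sends $\vec{p}$ to $\vec{p'}$; moreover $\vec{p}$ is the only point of $S$ whose image is $\vec{p'}$, since $\vec{s}+\vec{p'}-\vec{p}=\vec{p'}$ forces $\vec{s}=\vec{p}$. Hence
\[
  \bigl(S+(\vec{p'}-\vec{p})\bigr)\setminus\{\vec{p'}\}
  =\{\vec{s}+\vec{p'}-\vec{p}:\vec{s}\in S\setminus\{\vec{p}\}\}.
\]
For each such point I would then compute its pairing with $\vec{v}$:
\[
  \<\vec{v},\vec{s}+\vec{p'}-\vec{p}>
  =\<\vec{v},\vec{s}-\vec{p}>+\<\vec{v},\vec{p'}>
  > \<\vec{v},\vec{p'}>,
\]
where the strict inequality is exactly the corner-point property recorded above. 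This says precisely that every point of $\bigl(S+(\vec{p'}-\vec{p})\bigr)\setminus\{\vec{p'}\}$ lies strictly on the $\vec{v}$-side of $H'$, which is the assertion of the lemma.

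I do not expect a genuine obstacle here: the content of the statement is just the translation invariance of the signed position with respect to a parallel hyperplane family, and the inner-vector orientation carries that position in the intended direction. The only points that need a word of care are (i) fixing the orientation of $\vec{v}$ so that the inequality for the points of $S\setminus\{\vec{p}\}$ is strict and points toward $\vec{v}$, which is exactly what ``inner vector'' and ``open half space'' provide, and (ii) checking that deleting $\vec{p'}$ from the translated set corresponds precisely to deleting the corner point $\vec{p}$ from $S$, so that the strict inequality is available for every remaining point. Both are immediate from injectivity of the translation and the definition of a corner point.
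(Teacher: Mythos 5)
Your proof is correct and is essentially the paper's own argument made explicit: the paper proves the lemma in one line by noting that $H+(\vec{p'}-\vec{p})=H'$, so that $H'$ is a border plane of $S+(\vec{p'}-\vec{p})$ for $\vec{p'}$, which is exactly the translation invariance your inner-product computation spells out analytically. The two points you flag for care (injectivity of the translation and the orientation of $\vec v$) are sound and are left implicit in the paper's version.
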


\smallskip
\centerline{\epsfig{file=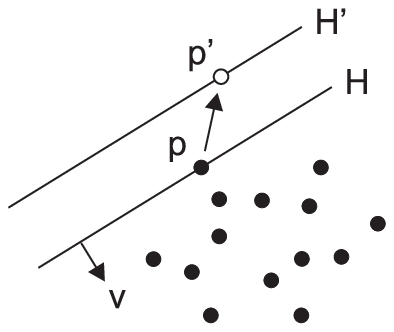}}
\smallskip

\begin{proof}
Since $H+(\vec{p'}-\vec{p})=H'$, $H'$ is a border plane of $S+(\vec{p'}-\vec{p})$ for $\vec{p'}$. This proves the lemma.
\end{proof}

By iterative application of the previous lemma we obtain the following theorem.

\begin{theorem}\label{Thm:ShiftRecurrence}
Let $S\subseteq\set Z^r$ be nonempty and finite, and let $a_{\vec s}\in\set K[\vec n]\setminus\{0\}$ for $\vec s \in S$, and $f\in\set K[\vec n]$. Let $\vec p$ be a corner point of $S$ with a border plane $H$ and inner vector $\vec v$. Then for every $s>0$ there exist finite sets
\begin{align}
W&\subseteq\set Z^r\cap \bigcup_{0\leq e\leq s}(H+e\vec{v})\text{ and}\label{Equ:W}\\
S'&\subseteq\set Z^r\cap \bigcup_{e>0}(H+(s+e)\vec{v}),\label{Equ:S'}
\end{align}
and polynomials $b,b_{\vec{i}}\in\set K[\vec n]$ such that for any solution $y\in\set K(\vec n)$ of~\eqref{Equ:PLDE} the relation~\eqref{Equ:ShiftRepresenation} holds.
\end{theorem}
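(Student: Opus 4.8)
The plan is to establish \eqref{Equ:ShiftRepresenation} constructively, by an iterative elimination procedure that mimics the univariate reduction of Section~\ref{sec:abramov}, with Lemma~\ref{Lemma:ShiftSystem} supplying the ``progress'' at each step. Throughout, I will measure points by the linear functional $\phi(\vec x):=\langle\vec v,\vec x-\vec p\rangle$, where $\vec v$ is the inner vector and $\vec p\in H$. Since $\vec p$ is a corner point, all points $\vec s\in S\setminus\{\vec p\}$ lie strictly on the $\vec v$-side of $H$, so $m:=\min_{\vec s\in S\setminus\{\vec p\}}\phi(\vec s)>0$. Moreover, because $\vec v\perp H$, a point lies on $H+e\vec v$ exactly when $\phi=e\|\vec v\|^2$; hence the closed strip in \eqref{Equ:W} is $\{\vec x:0\le\phi(\vec x)\le s\|\vec v\|^2\}$ and the far region in \eqref{Equ:S'} is $\{\vec x:\phi(\vec x)>s\|\vec v\|^2\}$.

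First I would set up the shifted recurrence. Applying $N^{\vec q-\vec p}$ to \eqref{Equ:PLDE} for an arbitrary $\vec q\in\set Z^r$ and solving for the image of the corner term gives
\[
N^{\vec q}y=\frac{N^{\vec q-\vec p}f-\sum_{\vec s\in S\setminus\{\vec p\}}(N^{\vec q-\vec p}a_{\vec s})\,N^{\vec s+\vec q-\vec p}y}{N^{\vec q-\vec p}a_{\vec p}},
\]
an identity valid for every solution $y$, whose only denominator is $N^{\vec q-\vec p}a_{\vec p}$. By Lemma~\ref{Lemma:ShiftSystem} applied to the parallel hyperplane $H'=H+(\vec q-\vec p)$ through $\vec q$, each exponent $\vec s+\vec q-\vec p$ on the right (with $\vec s\neq\vec p$) lies strictly beyond $H'$ on the $\vec v$-side; concretely $\phi(\vec s+\vec q-\vec p)=\phi(\vec q)+\phi(\vec s)\ge\phi(\vec q)+m$.

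Next I would run the elimination, maintaining an identity $N^{\vec p}y=(b+\sum_{\vec i\in T}b_{\vec i}N^{\vec i}y)/D$ with $b,b_{\vec i}\in\set K[\vec n]$, $T\subseteq\set Z^r$ finite, and $D$ a product of shifts of $a_{\vec p}$; start from the trivial instance $T=\{\vec p\}$, $b_{\vec p}=1$, $b=0$, $D=1$. As long as $T$ contains a \emph{pivot} $\vec q$ with $\phi(\vec q)\le s\|\vec v\|^2$ (equivalently $\vec q$ in the strip), replace $b_{\vec q}N^{\vec q}y$ using the shifted recurrence above; clearing its single extra denominator multiplies $D$ by $N^{\vec q-\vec p}a_{\vec p}$, deletes $\vec q$ from $T$, inserts the exponents $\vec s+\vec q-\vec p$ (all with strictly larger $\phi$), and keeps all data in $\set K[\vec n]$ and $\set Z^r$ (coinciding exponents are merged by adding their polynomial coefficients; the $N^{\vec q-\vec p}f$ terms accumulate into $b$). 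Collecting the pivots into $W$ and the surviving exponents into $S'$ yields exactly \eqref{Equ:ShiftRepresenation} with $D=\prod_{\vec i\in W-\vec p}N^{\vec i}a_{\vec p}$: every pivot satisfies $0\le\phi\le s\|\vec v\|^2$, so $W\subseteq\set Z^r\cap\bigcup_{0\le e\le s}(H+e\vec v)$, matching \eqref{Equ:W}, while every surviving exponent has $\phi>s\|\vec v\|^2$, so $S'\subseteq\set Z^r\cap\bigcup_{e>0}(H+(s+e)\vec v)$, matching \eqref{Equ:S'}.

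The main obstacle is showing the loop terminates and that $W,S'$ come out finite; everything else is bookkeeping. I would organize this as a finite tree: the root is $\vec p$, the children of a pivot $\vec q$ are the $|S|-1$ points $\vec s+\vec q-\vec p$, and the leaves are the points with $\phi>s\|\vec v\|^2$. Along any root-to-node path $\phi$ increases by at least $m$ per edge starting from $0$, while a node can only be internal (a pivot) when $\phi\le s\|\vec v\|^2$; hence every internal node has depth at most $\lceil s\|\vec v\|^2/m\rceil$, the tree has bounded depth and branching, and is therefore finite. This framing also dissolves the subtle point about merging: since the role of a point (pivot versus leaf) is determined solely by $\phi(\vec x)$, identifications across branches are consistent and can only shrink the tree, so an already-eliminated pivot is never reintroduced at a smaller $\phi$. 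The elimination is a traversal of this finite tree, giving termination, finiteness of $W$ (internal nodes) and $S'$ (leaves), and the claimed inclusions.
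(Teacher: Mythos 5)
Your proof is correct and takes essentially the same route as the paper's: solve the equation for the corner term, repeatedly substitute shifted copies of the equation for those shift points still inside the strip, using the corner-point property (the content of Lemma~\ref{Lemma:ShiftSystem}, which you encode via the functional $\phi$) to guarantee that each substitution pushes points strictly farther from $H$, until everything remaining lies beyond distance $s$. The only substantive difference is that you make termination and the non-repetition of pivots explicit (via the bounded-depth, bounded-branching tree and $\phi$-consistent merging), whereas the paper achieves the same effect by eliminating all points of minimal distance to $H$ simultaneously, layer by layer, and leaves termination implicit.
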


\smallskip
\centerline{\epsfig{file=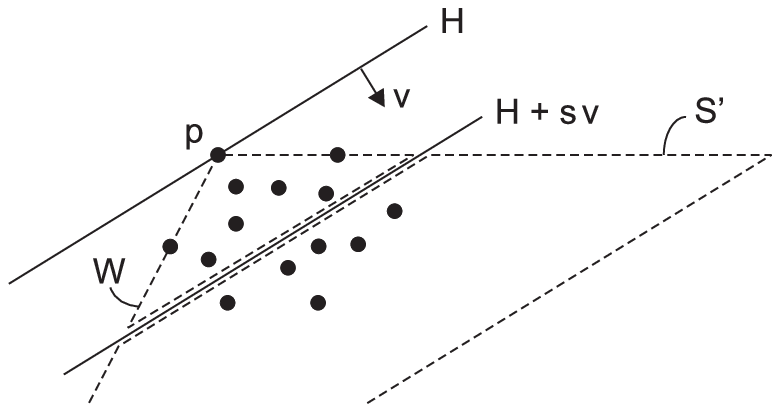}}
\smallskip

\begin{proof}
We show the theorem for a generic solution $y\in\set K(\vec{n})$ of~\eqref{Equ:PLDE}. Hence the ingredients  $b,b_{\vec{i}}\in\set K[\vec n]$ and $W,S'$ under consideration will hold for any specific solution as stated in the theorem.
For $\tilde{S}:=S\setminus\{\vec p\}$ we have
\begin{equation*}
N^{\vec{p}}y=\frac{f-\sum_{\vec s\in\tilde{S}}a_{\vec s}N^{\vec s}y}{a_{\vec{p}}}
\end{equation*}
by~\eqref{Equ:PLDE}. If $\tilde{S}=\{\}$, take $S'=\{\}$ and $W=\{\vec{p}\}$ and we are done. Otherwise, let $\vec{p'}\in\tilde{S}$ be such that the distance to $H$ is minimal. Define $H':=H+(\vec{p'}-\vec{p})$. If $H'\subseteq \{H+(s+e)\vec{v}:e>0\}$, we are again done with $W=\{\vec{p}\}$ and $S'=S\setminus\{\vec p\}$.\\
Now suppose that $H'\subseteq\{H+e\vec{v}:0\leq e\leq s\}$, and let $\{\vec{p}_1,\dots,\vec{p}_k\}=H'\cap \tilde{S}$ (by construction $\vec{p'}\in H'\cap \tilde{S}$) and define $S_1=\tilde{S}\setminus\{\vec{p}_1,\dots,\vec{p}_k\}$. Then we can write
\begin{equation}\label{Equ:PLDEOrg}
N^{\vec{p}}y=\frac{f-\sum_{\vec s\in S_1}a_{\vec s}N^{\vec s}y-\sum_{\vec s\in\{\vec{p}_1,\dots,\vec{p}_k\}}a_{\vec s}N^{\vec s}y}{a_{\vec{p}}}.
\end{equation}
By~\eqref{Equ:PLDE}
\begin{equation}\label{Equ:PLDEShift}
N^{\vec{p}_1}y=\frac{N^{\vec{p}_1-\vec{p}}f-\sum_{\vec s\in \tilde{S}+(\vec{p}_1-\vec{p})}a_{\vec s}N^{\vec s}y}{N^{\vec{p}_1-\vec{p}}a_{\vec{p}}}.
\end{equation}
In particular, by Lemma~\ref{Lemma:ShiftSystem} the points from
$\tilde{S}+(\vec{p}_1-\vec{p})$ are all outside of $H'$ in the half space determined by the direction of $\vec{v}$. Thus we   substitute~\eqref{Equ:PLDEShift} into~\eqref{Equ:PLDEOrg} and can express $N^{\vec{p}}y$ in the form
$$N^{\vec{p}}y=\frac{f'-\sum_{\vec s\in S'_1}a'_{\vec s}N^{\vec s}y-\sum_{\vec s\in\{\vec{p}_2,\dots,\vec{p}_k\}}a_{\vec s}N^{\vec s}y}{a_{\vec{p}}N^{\vec{p}_1-\vec{p}}a_{\vec{p}}}
$$
for some $f'\in\set K[\vec n]$ and $a'_{\vec s}\in\set K[\vec n]$ with
$S'_1=S_1\cup(\tilde{S}+\vec{p}_1-\vec{p})$. After $k-1$ further reductions, we
end up at the form
$$N^{\vec{p}}y=\frac{b-\sum_{\vec s\in S_1''}b_{\vec s}N^{\vec s}y}{\prod_{\vec{i}\in W_1-\vec{p}}N^\vec{i}a_{\vec{p}}}$$
for some $b,b_{\vec{s}}\in\set K[\vec{n}]$ and with
$W_1=\{\vec{p},\vec{p}_1,\dots,\vec{p}_k\}$ and $S_1''=S_1\cup \bigcup_{1\leq
  i\leq k}\tilde{S}+(\vec{p_i}-\vec{p})$.  Note that all points
$\vec{p}_1,\dots,\vec{p}_k$ which are closest to $H$ have been eliminated.  Now
we repeat the construction from above until we enter in the base case given in
the beginning of the proof. This completes the proof.
\end{proof}

Note that all the ingredients $W$, $S'$, $b\in\set K[\vec n]$ and the $b_{\vec{s}}$ for $\vec{s}\in W$ can be computed explicitly. However, for getting an aperiodic denominator bound, we only need $W$. The proof of Theorem~\ref{Thm:ShiftRecurrence} delivers the following simple algorithm.

\begin{algorithm}\label{Alg:FindStructureSet}
Input: A finite nonempty set $S\subseteq\set Z^r$ and a corner point $\vec{p}$ of $S$ with a border plane $H$ and inner vector $\vec v$; $s>0$.\\
Output: A finite set $W\subseteq\set Z^r$ with~\eqref{Equ:W} such that there are $S'$ with~\eqref{Equ:S'} and $b,b_{\vec{i}}\in\set K[\vec n]$ such that \eqref{Equ:ShiftRepresenation} holds for any solution $y\in\set K(\vec{n})$ of~\eqref{Equ:PLDE}.

\begin{algo}%
\tilde{S}:=S\setminus\{\vec p\}; {S}^\prime:=S\setminus\{\vec p\}; W:=\{\vec{p}\}
|while| \bigcup_{0\leq e\leq s}(H+e\vec{v})\cap S^\prime\neq \{\} |do|
\vtop{\hsize=.9\hsize\hangindent=2em\hangafter=1\noindent%
^^ILet $\{\vec{p_1},\dots,\vec{p_k}\}$ be the points in $S^\prime$ which have minimal distance to $H$.%
\vadjust{\kern2.5pt}}
^^I W:=W\cup\{\vec{p_1},\dots,\vec{p_k}\}
^^I S^\prime:=S^\prime\setminus\{\vec{p_1},\dots,\vec{p_k}\}
^^I S^\prime:=S^\prime\cup \bigcup_{1\leq i\leq k}\tilde{S}+(\vec{p_i}-\vec{p})
|enddo|
|return|W%
 \end{algo}
\end{algorithm}

Finally, we end up at the following theorem which tells us how we can compute an aperiodic denominator bound.

\begin{theorem}\label{Thm:UniversalBound}
Let $S\subseteq\set Z^r$ be finite and nonempty, and let
$a_{\vec s}\in\set K[\vec n]\setminus\{0\}$ for $\vec s \in S$, and $f\in\set K[\vec n]$; let $a'_{\vec{s}}$ be the aperiodic part of $a_{\vec{s}}$.\\
Take $s\in\set N\cup\{-\infty\}$ s.t.\ for any solution $y=\frac{p}{u\,q}\in\set K(\vec n)$ of~\eqref{Equ:PLDE} with periodic part $u$ and aperiodic part $q$ we have~\eqref{Equ:DisBound}.\\
Let $\vec p$ be a corner point of $S$ with a border plane $H$ and inner vector $\vec v$ with $|\vec v|\geq1$. Let $W$ be the output of Algorithm~\ref{Alg:FindStructureSet} with input $H$, $\vec p$ and $s$.
Then
$$\prod_{\vec{s}\in W-2\vec{p}}N^{\vec{s}}a'_{\vec{p}}$$
is an aperiodic universal denominator of~\eqref{Equ:PLDE}.
\end{theorem}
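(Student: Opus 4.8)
The plan is to transplant the univariate argument of Section~\ref{sec:abramov}: use Theorem~\ref{Thm:ShiftRecurrence} to express $N^{\vec p}y$ through shifts that are far from~$\vec p$, and then show, via the dispersion bound of Theorem~\ref{Thm:DispBound}, that the aperiodic factors contributed by those far shifts cannot interfere with the aperiodic denominator at~$\vec p$. Concretely, I would apply Theorem~\ref{Thm:ShiftRecurrence} to the corner point $\vec p$ with its border plane~$H$, inner vector~$\vec v$, and the dispersion bound~$s$, obtaining the representation~\eqref{Equ:ShiftRepresenation}
\[
  N^{\vec p}y=\frac{b+\sum_{\vec i\in S'}b_{\vec i}N^{\vec i}y}{\prod_{\vec i\in W-\vec p}N^{\vec i}a_{\vec p}}
\]
with $W$ as produced by Algorithm~\ref{Alg:FindStructureSet} and $S'\subseteq\set Z^r\cap\bigcup_{e>0}(H+(s+e)\vec v)$ by~\eqref{Equ:S'}. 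Write $y=p/(u\,q)$ in lowest terms and $a_{\vec p}=u_{\vec p}\,a'_{\vec p}$ with $u_{\vec p}$ periodic. Since shifts preserve periodicity and aperiodicity, the aperiodic part of the denominator of $N^{\vec p}y$ is $N^{\vec p}q$, and the aperiodic part of $\prod_{\vec i\in W-\vec p}N^{\vec i}a_{\vec p}$ is $D':=\prod_{\vec i\in W-\vec p}N^{\vec i}a'_{\vec p}$. It suffices to prove $N^{\vec p}q\mid D'$: applying the automorphism $N^{-\vec p}$ and reindexing $\vec s=\vec i-\vec p$ then yields $q\mid\prod_{\vec s\in W-2\vec p}N^{\vec s}a'_{\vec p}$, which is the asserted bound.

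To prove $N^{\vec p}q\mid D'$ I would compare $g$-adic valuations $v_g$ for each irreducible aperiodic factor~$g$ of $N^{\vec p}q$. Clearing denominators gives
\[
  \Bigl(\prod_{\vec i\in W-\vec p}N^{\vec i}a_{\vec p}\Bigr)N^{\vec p}y=b+\sum_{\vec i\in S'}b_{\vec i}N^{\vec i}y,
\]
and because $b,b_{\vec i}$ are polynomials and $g$ is aperiodic, the only way $g$ could occur in the denominator of the right-hand side is through a factor $g\mid N^{\vec i}q$ for some $\vec i\in S'$. The heart of the proof is to exclude this. If $g\mid N^{\vec p}q$ and $g\mid N^{\vec i}q$, then $g=N^{\vec p}g_1=c\,N^{\vec i}g_2$ for irreducible factors $g_1,g_2$ of~$q$ and some constant~$c$, so $g_2$ is an associate of $N^{\vec p-\vec i}g_1$ and hence $\vec p-\vec i\in\spread(g_2,g_1)\subseteq\spread(q,q)$; by Theorem~\ref{Thm:DispBound} this forces $\|\vec p-\vec i\|_\infty\leq\disp(q,q)\leq s$. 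On the other hand $\vec i\in S'$ lies at $\vec v$-level exceeding~$s$, i.e.\ $\langle\vec v,\vec i-\vec p\rangle>s\,|\vec v|^2$; taking $\vec v$ integral (as we may for a rational border plane) with $|\vec v|\geq1$, the elementary integer inequality $\|\vec v\|_1\leq|\vec v|^2$ gives $\langle\vec v,\vec i-\vec p\rangle\leq\|\vec v\|_1\,\|\vec i-\vec p\|_\infty\leq s\,\|\vec v\|_1\leq s\,|\vec v|^2$, a contradiction. Thus $g\nmid N^{\vec i}q$ for every $\vec i\in S'$.

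Consequently every term on the right-hand side has nonnegative $v_g$-valuation (using that $g$, being aperiodic, divides none of the periodic parts $N^{\vec i}u$), so the left-hand side does too; since $v_g(N^{\vec p}y)=-v_g(N^{\vec p}q)$ and $g\nmid\prod_{\vec i}N^{\vec i}u_{\vec p}$, this reads $v_g(D')\geq v_g(N^{\vec p}q)$. As this holds for every irreducible aperiodic~$g$, and holds trivially for periodic ones (both $D'$ and $N^{\vec p}q$ being aperiodic), we conclude $N^{\vec p}q\mid D'$. The main obstacle is the separation step of the second paragraph: one must verify that the coordinatewise dispersion bound really is incompatible with the $\vec v$-directional gap produced by Algorithm~\ref{Alg:FindStructureSet}, which is precisely where the integrality of~$\vec v$ and the hypothesis $|\vec v|\geq1$ enter. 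The degenerate cases ($\tilde S=\emptyset$, or $s=-\infty$ forcing $q$ constant) give $W=\{\vec p\}$ and make the bound hold trivially, and should be mentioned in passing.
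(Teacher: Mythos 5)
Your proposal follows the same route as the paper's own proof: apply Theorem~\ref{Thm:ShiftRecurrence} at the corner point to obtain~\eqref{Equ:ShiftRepresenation}, use the dispersion bound~\eqref{Equ:DisBound} to rule out any common aperiodic factor between $N^{\vec p}q$ and the denominators $N^{\vec i}q$ with $\vec i\in S'$, conclude $N^{\vec p}q\mid\prod_{\vec i\in W-\vec p}N^{\vec i}a'_{\vec p}$, and shift back by $N^{-\vec p}$. Your valuation bookkeeping and your treatment of the periodic parts and degenerate cases are a correct, explicit rendering of what the paper compresses into two sentences.

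The one place where you genuinely go beyond the paper is the separation step, and your extra care there is warranted. The paper settles that step with the single sentence ``Since $|\vec v|\geq1$, the distance of the points $S'$ to $\vec p$ is larger than~$s$,'' which silently conflates two metrics: Algorithm~\ref{Alg:FindStructureSet} guarantees a gap measured along $\vec v$ (Euclidean distance to $H$), whereas \eqref{Equ:DisBound} constrains elements of $\spread(q,q)$ coordinatewise, i.e.\ in the $\ell^\infty$ norm, and Euclidean distance larger than $s$ does not force $\|\vec i-\vec p\|_\infty>s$. Your chain
\[
  s\,|\vec v|^2<\langle\vec v,\vec i-\vec p\rangle\leq\|\vec v\|_1\,\|\vec i-\vec p\|_\infty\leq s\,\|\vec v\|_1
\]
supplies exactly the missing comparison, and it isolates the condition that actually closes the argument: $\|\vec v\|_1\leq|\vec v|^2$. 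This holds for every nonzero integral $\vec v$ (for which $|\vec v|\geq1$ is automatic), but not for every real $\vec v$ with $|\vec v|\geq1$: for $\vec v=(4/5,4/5)$ one has $|\vec v|\geq 1$ yet $\|\vec v\|_1=8/5>32/25=|\vec v|^2$, and then (for $s>0$) the point $\vec i-\vec p=(s,s)$ lies at $\vec v$-level $5s/4>s$ while $\|\vec i-\vec p\|_\infty\leq s$, so for such an inner vector neither your argument nor the paper's excludes a shared aperiodic factor. Hence your restriction to integral inner vectors (equivalently rational border planes, which always exist since $S\subseteq\set Z^r$) is not a defect of your write-up: it is the hypothesis under which this proof---the paper's included---actually goes through, and it covers every case arising in practice.
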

\begin{proof}
Let $y=\frac{p}{u\,q}\in\set K(\vec n)$ be a solution  of~\eqref{Equ:PLDE} with periodic part $u$ and aperiodic part $q$.
By construction of $W$ it follows that there are $S'$ and $b,b_{\vec{i}}\in\set K[\vec n]$ with~\eqref{Equ:S'} and~\eqref{Equ:ShiftRepresenation}. Since $|\vec{v}|\geq1$, the distance of the points $S'$ to $\vec{p}$ is larger than $s$.
Observe that $N^{\vec{p}}q$ must occur in the denominator of the right hand side of~\eqref{Equ:ShiftRepresenation}. Using~\eqref{Equ:DisBound}, the aperiodic denominator parts in $N^\vec{s}q$ with $\vec{s}\in S'$ cannot contribute to the aperiodic denominator part of $N^{\vec{p}}q$. Hence only the polynomial $\prod_{\vec{s}\in W-\vec{p}}N^{\vec{s}}a_{\vec{p}}$ is responsible for $N^\vec{p}q$, i.e., $$N^\vec{p}q\mid\prod_{\vec{s}\in W-\vec{p}}N^{\vec{s}}a_{\vec{p}}.$$ Thus $\prod_{\vec{s}\in W-2\vec{p}}N^{\vec{s}}a_{\vec{p}}$ is a universal denominator of~\eqref{Equ:PLDE}.
\end{proof}

\begin{example}
In the generic univariate case~\eqref{eq:olde} ($r=1$ and $S=\{0,\dots,m\}$) the coefficients are given by
$a_i$ for $0\leq i\leq m$. First, we take the corner point $m$ with the border plane $H=\{m\}$ and inner vector $-1$. Note that with~\eqref{Equ:UnivariateDisp} it follows that~\eqref{Equ:DisBound} holds
for any solution $y=\frac{p}{q}\in\set K(\vec n)$; see also Example~\ref{Exp:DispConnection}. Applying Algorithm~\ref{Alg:FindStructureSet} with the input $S$, $m$, $H$, $-1$, we get
$W=\{m,m-1,\dots,m-s\}$. Hence Theorem~\ref{Thm:UniversalBound} delivers the universal denominator bound (here we have only aperiodic factors)
$$\prod_{\vec{s}\in W-2m}N^{\vec{s}}a_{m}$$
which agrees with~\eqref{Equ:BoundForRightCornerPoint}. Similarly, if we take the corner point $0$ with the inner vector $1$, we get $W=\{0,1,\dots,s\}$ and obtain the universal denominator bound
$$\prod_{\vec{s}\in W-2\cdot 0}N^{\vec{s}}a_{0}=\prod_{i=0}^sN^ia_0.$$
Combining these two estimates produces~\eqref{Equ:AbramovDen}.
\end{example}

{}From the point of view of application the following remarks are in place.

\begin{enumerate}
\item $s\in \set N\cup\{-\infty\}$ can be computed by Theorem~\ref{Thm:DispBound} and by applying Algorithm~\ref{Alg:Spread}.

\item Applying Algorithm~\ref{Alg:FindStructureSet} we can compute the finite set $W\subseteq\set Z^r$; here we remark that different choices of the border plane $H$ might lead to sets $W$ of different size. Exploiting the particular structure of $S$ gives room for improvement.

\item Suppose that we are given $k$ corner points with corresponding border planes. Then by Theorem~\ref{Thm:UniversalBound} we end up at different aperiodic universal denominators, say $d_1,\dots,d_k\in\set K[\vec{n}]$.
    Then taking
    $$\gcd(d_1,\dots,d_k)$$
    leads to a sharper universal bound.
\item The coefficients $a_{\vec{s}}$ with $\vec{s}\in S$ are often available in factorized form. Then also the $d_i$s are obtained in factorized form, and the gcd-computations boil down to comparisons of these factors and bookkeeping of their multiplicities.
\end{enumerate}

Combining the aperiodic denominator bounds for different corner points gives the following result.

\begin{theorem}\label{Thm:MustContainFactors}
Let $S\subseteq\set Z^r$ be finite and nonempty and let $a_{\vec s}\in\set K[\vec n]\setminus\{0\}$ for $\vec s \in S$, and $f\in\set K[\vec n]$. Let $\vec{p}_1,\dots,\vec{p}_k\in S$ be corner points of $S$. If the denominator of a rational solution of~\eqref{Equ:PLDE} contains an aperiodic irreducible factor, then shift equivalent factors occur in each of the coefficients $a_{\vec{p}_1},\dots,a_{\vec{p}_k}$.
\end{theorem}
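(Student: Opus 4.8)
The plan is to read off the statement as an almost immediate consequence of Theorem~\ref{Thm:UniversalBound}, applied once at each of the given corner points. First I would fix a rational solution $y=\frac{p}{u\,q}\in\set K(\vec n)$ of~\eqref{Equ:PLDE} with periodic part $u$ and aperiodic part $q$, and let $v$ be an aperiodic irreducible factor of its denominator. Since $v$ is irreducible and aperiodic, it cannot divide the periodic part $u$, so it must occur among the aperiodic irreducible factors, i.e.\ $v\mid q$.

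Next, for each corner point $\vec p_j$ I would pick a border plane $H_j$ together with an inner vector $\vec v_j$ satisfying $|\vec v_j|\geq1$, choose $s\in\set N\cup\{-\infty\}$ as in Theorem~\ref{Thm:DispBound} (so that~\eqref{Equ:DisBound} holds for $q$), and let $W_j$ be the output of Algorithm~\ref{Alg:FindStructureSet} with input $S$, $\vec p_j$, $H_j$, $\vec v_j$ and $s$. By Theorem~\ref{Thm:UniversalBound} the polynomial
$$d_j=\prod_{\vec s\in W_j-2\vec p_j}N^{\vec s}a'_{\vec p_j}$$
is then an aperiodic universal denominator of~\eqref{Equ:PLDE}, which by definition means $q\mid d_j$. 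In particular $v\mid q\mid d_j$.

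Finally I would exploit that $\set K[\vec n]$ is a unique factorization domain and that each shift operator $N^{\vec s}$ is a ring automorphism preserving irreducibility. Since $v$ is irreducible (hence prime) and $d_j$ is a product of the shifted polynomials $N^{\vec s}a'_{\vec p_j}$, the factor $v$ must divide one of these factors, say $v\mid N^{\vec s}a'_{\vec p_j}$. Applying $N^{-\vec s}$ then shows that the irreducible polynomial $N^{-\vec s}v$ divides $a'_{\vec p_j}$, and therefore divides $a_{\vec p_j}$. As $N^{-\vec s}v$ is by construction shift equivalent to $v$, each of the coefficients $a_{\vec p_1},\dots,a_{\vec p_k}$ contains an irreducible factor shift equivalent to $v$; since all these factors are shift equivalent to the common $v$, they are also shift equivalent to one another, which is exactly the assertion.

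Because all the substantial work is already carried out in Theorem~\ref{Thm:UniversalBound}, I do not expect a genuine obstacle here. The only point requiring a little care is the passage from ``$v$ divides the product $d_j$'' to ``a shift of $v$ divides $a_{\vec p_j}$'': this relies on the primality of the irreducible $v$ in the UFD $\set K[\vec n]$ (so that it cannot split across distinct factors of the product) together with the invariance of irreducibility under the shifts. I would also make explicit that the argument is uniform in $j$, so that one and the same aperiodic factor $v$ produces shift equivalent factors in \emph{every} corner coefficient simultaneously.
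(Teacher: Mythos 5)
Your proposal is correct and follows essentially the same route as the paper: both apply Theorem~\ref{Thm:UniversalBound} at each corner point and conclude that the aperiodic irreducible factor must divide each resulting bound $\prod_{\vec i}N^{\vec i}a'_{\vec p_j}$, hence a shift-equivalent factor sits in every $a_{\vec p_j}$ (the paper merely packages the individual bounds into a single $\gcd$ before drawing this conclusion). Your added details---that the factor lies in $q$ rather than $u$, and the UFD/primality argument for passing from ``divides the product'' to ``divides one shifted factor''---are exactly the steps the paper leaves implicit.
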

\begin{proof}
By Theorem~\ref{Thm:UniversalBound} aperiodic denominator bounds can be derived by the corner points $\vec{p}_j$ in the form $\prod_{\vec{i}}N^{\vec{i}}a_{\vec{p}_j}$, respectively. Hence an aperiodic denominator bound of~\eqref{Equ:PLDE} can be written in the form
$$d=\gcd(\prod_{\vec{i}}N^{\vec{i}}a_{\vec{p}_1},\dots,\prod_{\vec{i}}N^{\vec{i}}a_{\vec{p}_k}).$$
If a rational solution contains an aperiodic irreducible factor $h$, then $h$ is also contained in $d$. Hence $h$ or a shift equivalent factor occurs in each of the $a_{\vec{p}_1},\dots,a_{\vec{p}_k}$.
\end{proof}

The following special cases are immediate.

\begin{corollary}
Let $S\subseteq\set Z^r$ be finite and nonempty, and let $a_{\vec s}\in\set K[\vec n]\setminus\{0\}$ for $\vec s \in S$, and $f\in\set K[\vec n]$. Let $\vec{s},\vec{t}\in S$ be two corner points and let $a'_{\vec{s}}$ and $a'_{\vec{t}}$ be the aperiodic parts of the coefficients $a_{\vec{s}}$ and $a_{\vec{t}}$, respectively. If $\disp(a'_{\vec{s}},a'_{\vec{t}})=-\infty$, then the aperiodic denominator part of any rational solution of~\eqref{Equ:PLDE} is $1$.
\end{corollary}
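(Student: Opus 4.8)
The plan is to derive the statement as a direct consequence of Theorem~\ref{Thm:MustContainFactors}, applied to the two corner points $\vec s$ and $\vec t$. First I would unwind the hypothesis: by the convention $\max\emptyset=-\infty$, the assumption $\disp(a'_{\vec s},a'_{\vec t})=-\infty$ is equivalent to $\spread(a'_{\vec s},a'_{\vec t})=\emptyset$. By the definition of the spread together with its decomposition over irreducible factors, this says precisely that no irreducible factor of $a'_{\vec s}$ is shift equivalent to any irreducible factor of $a'_{\vec t}$.

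Next I would argue by contradiction. Suppose some rational solution $y=\frac{p}{u\,q}$ of~\eqref{Equ:PLDE} has a nontrivial aperiodic denominator part, i.e.\ $q$ is not a constant. Then $q$ has an irreducible factor $h$, which is aperiodic by definition of the aperiodic part, so the denominator of $y$ contains the aperiodic irreducible factor $h$. Applying Theorem~\ref{Thm:MustContainFactors} with $k=2$ and the corner points $\vec{p}_1=\vec s$, $\vec{p}_2=\vec t$ yields factors $g$ of $a_{\vec s}$ and $g'$ of $a_{\vec t}$ that are each shift equivalent to $h$.

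The one step that needs a small argument is that $g$ and $g'$ actually lie in the aperiodic parts $a'_{\vec s}$ and $a'_{\vec t}$. This follows because shift equivalence preserves aperiodicity: if $g=N^{\vec i}h$, then $\spread(g,g)=\spread(h,h)$, which is finite since $h$ is aperiodic, so $g$ is aperiodic as well; by the uniqueness of the periodic/aperiodic factorization, $g$ must then divide the aperiodic part $a'_{\vec s}$, and likewise $g'$ divides $a'_{\vec t}$.

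Finally I would chain the shift equivalences. Since $g$ and $g'$ are both shift equivalent to $h$, they are shift equivalent to each other, so there is a vector $\vec i\in\set Z^r$ with $\gcd(a'_{\vec s},N^{\vec i}a'_{\vec t})\neq1$, hence $\vec i\in\spread(a'_{\vec s},a'_{\vec t})$. This contradicts $\spread(a'_{\vec s},a'_{\vec t})=\emptyset$. Therefore $q$ must be a constant, i.e.\ the aperiodic denominator part of any rational solution is $1$. The main obstacle is the middle step, verifying that the shift-equivalent factors land in the aperiodic parts rather than being absorbed into the periodic parts; everything else is a routine unwinding of the definitions.
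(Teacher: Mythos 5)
Your proof is correct and takes essentially the same route as the paper: the paper states this corollary without proof as an ``immediate'' consequence of Theorem~\ref{Thm:MustContainFactors}, and your argument is exactly that derivation, spelled out. The one detail you add beyond what the paper leaves implicit --- that shift equivalence preserves aperiodicity, so the shift-equivalent factors guaranteed by Theorem~\ref{Thm:MustContainFactors} must divide the aperiodic parts $a'_{\vec s}$ and $a'_{\vec t}$ rather than the periodic parts --- is a genuine and correctly handled point, not a deviation.
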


\begin{corollary}
Let $S\subseteq\set Z^r$ be finite and nonempty, and let $a_{\vec s}\in\set K[\vec n]\setminus\{0\}$ for $\vec s \in S$, and $f\in\set K[\vec n]$. If there is a corner point of $S$ whose coefficient has no aperiodic factor, then the aperiodic denominator part of any rational solution of~\eqref{Equ:PLDE} is $1$.
\end{corollary}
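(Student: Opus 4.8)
The plan is to invoke Theorem~\ref{Thm:UniversalBound} directly with the distinguished corner point and then observe that the resulting bound degenerates to a constant.

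First I would fix a corner point $\vec p\in S$ whose coefficient $a_{\vec p}$ has no aperiodic factor, as guaranteed by the hypothesis. Being a corner point, $\vec p$ comes with a border plane $H$ and an inner vector $\vec v$, which I may rescale so that $|\vec v|\geq1$, exactly as required by Theorem~\ref{Thm:UniversalBound}. The crucial observation is that ``$a_{\vec p}$ has no aperiodic factor'' means precisely that the aperiodic part $a'_{\vec p}$ of $a_{\vec p}$ is a nonzero element of $\set K$, i.e.\ a unit.

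Next I would compute, as in Theorem~\ref{Thm:DispBound}, a dispersion bound $s$ with the property~\eqref{Equ:DisBound}, run Algorithm~\ref{Alg:FindStructureSet} on $S$, $\vec p$, $H$, $\vec v$, $s$ to obtain the finite set $W$, and apply Theorem~\ref{Thm:UniversalBound}. This yields the aperiodic universal denominator
$$d=\prod_{\vec s\in W-2\vec p}N^{\vec s}a'_{\vec p}.$$
Since $a'_{\vec p}$ is a nonzero constant, every shift $N^{\vec s}a'_{\vec p}$ equals that same constant, and hence $d$ is itself a nonzero element of $\set K$.

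Finally, I would unwind the definition of an aperiodic universal denominator: for any solution $y=\frac{p}{u\,q}$ of~\eqref{Equ:PLDE} with periodic part $u$ and aperiodic part $q$, we have $q\mid d$. As $d$ is a unit, $q$ must be a unit as well, i.e.\ the aperiodic denominator part of $y$ is $1$. I expect no real obstacle here beyond carefully matching the hypothesis to the notion of aperiodic part; the content is entirely carried by Theorem~\ref{Thm:UniversalBound}, and this corollary merely records the trivialization of its bound in the case where a corner coefficient is purely periodic.
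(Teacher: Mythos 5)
Your proposal is correct, but it reaches the conclusion by a different (and somewhat more direct) route than the paper. The paper presents this corollary as an immediate special case of Theorem~\ref{Thm:MustContainFactors}: if the denominator of a rational solution had an aperiodic irreducible factor, then a shift-equivalent copy of it---which is again aperiodic, since shifting preserves (a)periodicity---would have to occur in the coefficient of \emph{every} corner point, contradicting the existence of a corner point whose coefficient is purely periodic. You instead apply Theorem~\ref{Thm:UniversalBound} at that single corner point: since $a_{\vec p}$ has no aperiodic factor, its aperiodic part $a'_{\vec p}$ is a nonzero constant, so the bound $\prod_{\vec s\in W-2\vec p}N^{\vec s}a'_{\vec p}$ is itself a nonzero constant, and the aperiodic denominator part $q$ of any solution, dividing a unit, must be~$1$. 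Both deductions are valid one-liners. Your shortcut bypasses Theorem~\ref{Thm:MustContainFactors} and the gcd over several corner-point bounds entirely, needing only one corner point; note, however, that it hinges on Theorem~\ref{Thm:UniversalBound} being stated in terms of the aperiodic part $a'_{\vec p}$ rather than the full coefficient $a_{\vec p}$---with the full coefficient one would need the additional (easy) observation that a nonconstant aperiodic polynomial cannot divide a product of shifts of a purely periodic one. The paper's route, in exchange, treats both corollaries uniformly and makes the shift-equivalence structure explicit. One small technicality, which your write-up shares with the paper's own use of Theorem~\ref{Thm:UniversalBound}: Algorithm~\ref{Alg:FindStructureSet} expects $s>0$, so if the dispersion bound from Theorem~\ref{Thm:DispBound} comes out as $0$ or $-\infty$ one should replace it by $\max(s,1)$, which is harmless since~\eqref{Equ:DisBound} remains valid for any larger value.
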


Besides these structural consequences, Theorem~\ref{Thm:MustContainFactors} provides the following improvement of our aperiodic denominator bound algorithm. To be more precise, Lemma~\ref{Lemma:DispBound} and thus Theorem~\ref{Thm:DispBound} can be improved in the following way. In the proof of Lemma~\ref{Lemma:DispBound} we assume that there are irreducible factors $u$ and $v$ in the denominator of the solution $y\in\set K(\vec{n})$ of~\eqref{Equ:PLDE} such that~\eqref{Equ:DistanceTooBig}
for some $\vec e=(e_1,\dots e_r)\in\set Z^r$ with $e_i=d$ where $d$ is larger than $s_i$. By the choice of $s_i$ this leads to a contradiction. Now we exploit in addition Theorem~\ref{Thm:MustContainFactors}: the factors $u$ and $v$ can be only factors that occur --up to shift equivalence-- in each coefficient of the corner points $\vec{p}_1,\dots,\vec{p}_k$. Hence it suffices to choose $s_i$
as summarized in the following proposition.

\begin{proposition}
Let $S\subseteq\set Z^r$ be finite and nonempty, and let $a_{\vec s}\in\set K[\vec n]\setminus\{0\}$ for $\vec s \in S$, and $f\in\set K[\vec n]$. Let $\vec{p}_1,\dots,\vec{p}_k$ be corner points of $S$, and let
$a'_{\vec{s}}$ be the aperiodic part of $a_{\vec{s}}$ whose factors are present --up to shift equivalence-- in each coefficient of the corner points. Let $i\in\{1,\dots,r\}$. Define
$$k:=\max\{|a_i-b_i|: (a_1,\dots,a_r),(b_1\dots,b_r)\in S\}$$
and let
\begin{align*}
A=&\{(s_1,\dots,s_r):(s_1,\dots,s_r)\in S
\text{ and }\\
&\quad\exists (t_1,\dots,t_r)\in S\text{ s.t.\ } t_i-s_i=k\}\\
B=&\{(s_1,\dots,s_r):(s_1,\dots,s_r)\in S\text{ and }\\
&\quad\exists (t_1,\dots,t_r)\in S\text{ s.t.\ } s_i-t_i=k\}.
\end{align*}
Define
$$s_i:=\max\{\disp_i(a'_{\vec{s}},N^{-k}_{i}a'_{\vec{t}}):\;\vec{s}\in A\text{ and } \vec{t}\in B\}.$$
Then for any solution $y=\frac{p}{u\,q}\in\set K(\vec n)$ of~\eqref{Equ:PLDE} with periodic part $u$ and aperiodic part $q$ we have
$$\disp_i(q)=\disp_i(q,q)\leq s_i.$$
\end{proposition}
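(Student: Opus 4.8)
The plan is to obtain this proposition as a refinement of Lemma~\ref{Lemma:DispBound} by re-running its proof verbatim but carrying along the extra membership constraint supplied by Theorem~\ref{Thm:MustContainFactors}. First I would observe that the statement of the proposition differs from Lemma~\ref{Lemma:DispBound} only in which aperiodic parts $a'_{\vec s}$ enter the definition of $s_i$: instead of the full aperiodic part of each $a_{\vec s}$, we now take only those irreducible aperiodic factors that are present, up to shift equivalence, in the coefficient of \emph{every} corner point $\vec p_1,\dots,\vec p_k$. So the whole task is to justify that restricting to these factors does not lose any factor that could actually show up in an aperiodic denominator.

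The key step is the following. Suppose, for contradiction, that $\disp_i(q)=d>s_i$ for the aperiodic part $q$ of some solution $y=p/(u\,q)$. Exactly as in the proof of Lemma~\ref{Lemma:DispBound}, this produces irreducible factors $u,v$ of $q$ with $N^{\vec e}u=v$ for some $\vec e$ with $e_i=d$. Now I would invoke Theorem~\ref{Thm:MustContainFactors}: since $u$ (and hence $v$, being shift equivalent to $u$) is an aperiodic irreducible factor of the denominator of a rational solution, it must occur, up to shift equivalence, in \emph{each} of the corner-point coefficients $a_{\vec p_1},\dots,a_{\vec p_k}$. Consequently, whenever the original proof of Lemma~\ref{Lemma:DispBound} tracks the factor $u'$ or $v'$ through the coefficients $a_{\vec s}$, it suffices to look only inside the subpolynomials $a'_{\vec s}$ consisting of those factors common to all corner points, because $u$ and $v$ are guaranteed to live there. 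The combinatorial argument comparing distances in the $i$-coordinate and the lexicographic maximality of $\vec u,\vec v,\vec w$ then goes through unchanged, producing the same contradiction whenever $d>s_i$ with $s_i$ computed from the \emph{restricted} aperiodic parts. Hence $\disp_i(q)\leq s_i$ as claimed.

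The main obstacle I anticipate is not conceptual but one of careful bookkeeping: I must make sure that replacing $a'_{\vec s}$ (the full aperiodic part) by the restricted aperiodic part in the definition of $s_i$ never drops a factor that is genuinely relevant to the contradiction. This is precisely what Theorem~\ref{Thm:MustContainFactors} buys us, since it certifies that any $u,v$ arising from a real solution already lie in the common-to-all-corners subfactorization; so the dispersions measured against the restricted parts are as good as those measured against the full aperiodic parts for the purpose of detecting such $u,v$. The cleanest way to present this is therefore to state explicitly that the proof is identical to that of Lemma~\ref{Lemma:DispBound}, with the single additional input that, by Theorem~\ref{Thm:MustContainFactors}, the factors $u$ and $v$ produced in the case $d>s_i$ occur up to shift equivalence in every corner-point coefficient and hence in the restricted aperiodic parts; the contradiction then follows exactly as before, and therefore $\disp_i(q)=\disp_i(q,q)\leq s_i$.
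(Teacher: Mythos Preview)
Your proposal is correct and matches the paper's own argument essentially verbatim: the paper explains, in the paragraph immediately preceding the proposition, that one re-runs the proof of Lemma~\ref{Lemma:DispBound} and additionally invokes Theorem~\ref{Thm:MustContainFactors} to conclude that the irreducible factors $u$ and $v$ arising in the contradiction must occur, up to shift equivalence, in each corner-point coefficient, so that restricting the $a'_{\vec s}$ to such factors suffices. There is nothing to add.
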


\section{Examples}

\begin{example}
Consider the recurrence
\begin{alignat*}1
 &(2 k n+1) (6 k^2+12 k-4 n^2-4 n+5) f(n,k)\\
 &\quad{}+(2 k n+4 k+1) (6 k^2+10 k+4 n^2+8 n-7) f(n+1,k)\\
 &\quad{}-(2 k n+8 n+1) (6 k^2+24 k+4 n^2-20 n-7) f(n,k+2)\\
 &\quad{}-(2 k n+4 k+8 n+17)\\
 &\qquad\quad{}\times(6 k^2+22 k-4 n^2+16 n+45) f(n+1,k+2)=0.
\end{alignat*}
The maximum spread among the coefficients of this recurrence is~$s=4$.

Every point in the shift set $\{(0,0),\penalty0 (1,0),\penalty0 (0,2),\penalty0 (1,2)\}$
qualifies as a corner point. We choose $\vec p = (0,0)$ as corner point
and let $H$ be the plane through $\vec p$ orthogonal to $\vec v=(1,1)$.

Algorithm~\ref{Alg:FindStructureSet} delivers the set
\begin{alignat*}1
    \{&(0,0),(1,0),(2,0),(3,0),(4,0),(5,0),(6,0),(7,0),(8,0),\\
      &(0,2),(1,2),(2,2),(3,2),(4,2),(5,2),(6,2)\\
      &(0,4),(1,4),(2,4),(3,4),(4,4),\\
      &(0,6),(1,6),(2,6),\\
      &(0,8)\}
\end{alignat*}
as $W$, from which by Theorem~\ref{Thm:UniversalBound} it follows that
\[
  \prod_{i=0}^8\prod_{j=0}^{4-\lceil i/2\rceil} N^i K^{2j} (2 k n+1) (6 k^2+12 k-4 n^2-4 n+5)
\]
is a universal aperiodic denominator.

Taking instead $(1,2)$ as corner point gives the aperiodic denominator bound
\begin{alignat*}1
  &\prod_{i=-8}^0\prod_{j=-4-\lfloor i/2\rfloor}^0 N^{i-1} K^{2j-2}\bigl((2 k n+4 k+8 n+17)\\[-5pt]
  &\qquad\qquad\qquad\qquad{}\times(6 k^2+22 k-4 n^2+16 n+45)\bigr).
\end{alignat*}
The greatest common divisor of the two polynomials is
\[
  (2 k n+1) (2 (k+2) n+1) (2 k (n+1)+1) (2 (k+2) (n+1)+1).
\]
This is exactly the denominator of the actual solution
\[
  \tfrac{3 k+n}{(2 k n+1) (2 (k+2) n+1) (2 k (n+1)+1) (2 (k+2) (n+1)+1)}
\]
of the recurrence.

The computation could have been simplified by disregarding the factors
$p=6 k^2+12 k-4 n^2-4 n+5$ and $q=6 k^2+22 k-4 n^2+16 n+45$. Because of
\[
  \spread(p,q)=\{\},
\]
they cannot contribute to the universal denominator (compare Theorem~\ref{Thm:MustContainFactors}).
\end{example}

\begin{example}
Some corner points may be easier to handle than others.
As an example, consider the equation
\begin{alignat*}1
  &(k^2+n^2+1) (2 k^4+4 k^3+4 k^2 n^2+4 k^2 n+6 k^2+4 k n^2\\
  &\quad{}+8 k n+9 k+2 n^4+4 n^3+6 n^2+3 n+4) f(n,k)\\
  &-(k^2+4k+n^2+5) (2 k^4+4 k^3+4 k^2 n^2+4 k^2 n+6 k^2-1\\
  &\quad{}+4 k n^2-2 k n-k+2 n^4+4 n^3+6 n^2-2 n)f(n,k+1)\\
  &-(2 k+1) (n+1) (k^2+n^2+4 n+5) f(n+1,k)=0.
\end{alignat*}
Without any computation, it can be deduced that any potential aperiodic
factor in a denominator must be a shifted copy of $k^2+n^2+4 n+5=(n+2)^2+k^2+1$.
Indeed, a rational solution of the equation is given by
\[
  \frac{k^2+n^2}{\left(k^2+n^2+1\right) \left((k+1)^2+n^2+1\right)
    \left(k^2+(n+1)^2+1\right)}.
\]
\end{example}

\begin{example}\label{ex:5}
  Theorem~\ref{Thm:UniversalBound} is not sufficient for predicting periodic
  factors of a denominator. As an example, consider the equation
\begin{alignat*}1
 & 2 (k+n+1) f(n,k)-(k+3 n+8) f(n,k+1)\\
 &-(5 k+3 n+12) f(n+1,k)+3 (k+n+5) f(n+1,k+1)\\
 &+(k+n+5)f(n+2,k)=0.
\end{alignat*}
Possible choices for corner points are $(0,0),\penalty0(0,1),\penalty0(1,1)$ and
$(2,0)$.  Because of
\[
  \spread(k+n+1,3n+k+8)=\{\},
\]
one might be tempted to believe that only trivial denominators can occur in a
solution.  However, the equation admits the nontrivial rational function
solution
\[
  \frac{n^2+k^2}{(k+n+1)(k+n+2)(k+n+3)}.
\]
Observe that although not \textbf{every} corner point contains a shifted copy of
$k+n+1$, there is still \textbf{some} corner point which does.

This need not be the case, as indicated by the example
\[
  f(n+1,k)-f(n,k+1)=0
\]
already mentioned in the introduction.  This example, however, is special
because the shift set $\{(0,1),(1,0)\}$ belongs to a proper affine subspace
of~$\set Z^r$, and whenever this is the case, say the shift set belongs to a
subspace $L\subsetneq\set Z^r$, then for every vector $v=(v_1,\dots,v_r)\in
L^\bot\setminus\{0\}$ the polynomial $p=v_1n_1+\cdots+v_rn_r$ has the property
that $f$ is a rational solution of the equation if and only if $fp^\alpha$ is,
for any $\alpha\in\set Z$.  In particular, if there exists a rational solution
at all, then there also exists one whose denominator does not contain~$p$.

Apart from this exceptional situation, we observed on all the examples we
considered that periodic factors of a denominator appeared (possibly as a
shifted copy) in at least one of the coefficients corresponding to some corner
point of the shift set.  This at least suggests the periodic factors of these
coefficients as plausible guesses for the periodic part of the denominator
bound.
\end{example}

\section{Conclusion}

There are polynomials in several variables which may have an infinite spread.
Such polynomials are called periodic, while polynomials that cannot have
infinite spread are called aperiodic. Each polynomial can be split into a
periodic and an aperiodic part.

We have shown that for partial linear difference equations with polynomial
coefficients it is possible to determine all the factors that may possibly occur
in the aperiodic part of the denominator of a rational function solution. The
construction is a generalization of the corresponding result for univariate
equations.  It probably admits further generalization to $q$-equations or
equations whose coefficients belong to a $\Pi\Sigma$-field.
As it stands, Theorem~\ref{Thm:UniversalBound} will tend to produce
only a rough bound for the aperiodic part of the denominator, but we have
pointed out several refinements for improving the efficiency of the computation
on concrete examples.  Eventually, it would be interesting to see whether it is
possible to come up with an Abramov-style algorithm for directly computing the
greatest common divisor of all the individual bounds obtained from each corner
point. Until now, we have not succeeded in constructing such an algorithm.

It also remains open how to bound periodic factors of the denominator.  The
situation illustrated in Example~\ref{ex:5}, which seems to be typical both for
random equations and for equations coming from applications, indicates that our
result for aperiodic factors does not directly extend to periodic ones.  On the
other hand, it also indicates that an equation typically provides some hints for
the periodic factors in the denominators of its rational solutions.  This is
useful for making plausible heuristic guesses.  It also gives some hope that at
least for certain types of equations the periodic part of a denominator can be
found algorithmically. This needs further investigation.

\bibliographystyle{plain}
\bibliography{all}

\end{document}